\newcommand{\R}{\mathbbm{R}}
\newcommand{\rr}{\mathbbm{R}}
\newcommand{\cc}{\mathbbm{C}}
\newcommand{\1}{\mathbbm{1}}
\newcommand{\HS}{{\mathfrak{H}}}
\newcommand{\lio}{{\cal L}}
\newcommand{\M}{{\cal M}}
\newcommand{\K}{{\cal K}}
\newcommand{\sym}{{\cal S}}
\newcommand{\qed}{}
\newcommand{\tr}{{\rm tr}}
\newcommand{\be}{\begin{equation}}
\newcommand{\ee}{\end{equation}}
\newcommand{\bea}{\begin{eqnarray}}
\newcommand{\eea}{\end{eqnarray}}
\newcommand{\ket}[1]{|#1\rangle}
\newcommand{\bra}[1]{\langle#1|}
\newcommand{\avr}[1]{\langle#1\rangle}
\newcommand{\ra}{\rightarrow}
\def\Proof{\noindent\textsc{Proof:}}
\def\proof{\Proof}
\def\qed{\leavevmode\unskip\penalty9999 \hbox{}\nobreak\hfill
     \quad\hbox{\leavevmode  \hbox to.77778em{%
               \hfil\vrule   \vbox to.675em%
               {\hrule width.6em\vfil\hrule}\vrule\hfil}}
     \par\vskip3pt}
\begin{document}

\newtheorem{theorem}{Theorem}
\newtheorem{lemma}[theorem]{Lemma}
\newtheorem{corollary}[theorem]{Corollary}
\newtheorem{proposition}[theorem]{Proposition}
\newtheorem{definition}[theorem]{Definition}
\newtheorem{example}[theorem]{Example}
\newtheorem{conjecture}[theorem]{Conjecture}

\newenvironment{remark}{\vspace{1.5ex}\par\noindent{\it Remark}}%
    {\hspace*{\fill}$\Box$\vspace{1.5ex}\par}

\centerline{\sc \large The $\chi^2$ - divergence and Mixing times of quantum Markov processes}

\author{K. Temme$^1$}
\author{M. J. Kastoryano$^{2,3}$}
\author{M. B. Ruskai$^4$}
\author{M. M. Wolf$^2$}
\author{F. Verstraete$^1$}

\affiliation{$^1$Faculty of Physics, University of Vienna, Boltzmanngasse 5, A-1090 Vienna, Austria\\
		$^2$Niels Bohr Institute, Blegdamsvej 17, DK-2100 Copenhagen, Denmark \\
		$^3$Niels Bohr International Academy, Blegdamsvej 17, 
		DK-2100 Copenhagen, Denmark \\
		$^4$Department of Mathematics, Tufts University, Medford, MA 02155, USA}

\date{\today}

\begin{abstract}
We introduce quantum versions of the $\chi^2$-divergence, provide a
detailed analysis of their properties, and apply them in the
investigation of mixing times of quantum Markov processes. An approach
similar to the one presented in \cite{Diaconis,Fill,Mihail} for classical Markov chains is
taken to bound the trace-distance from the steady state of a quantum
processes. A strict spectral bound to the convergence rate can be
given for time-discrete as well as for time-continuous quantum Markov
processes. Furthermore the contractive behavior of the
$\chi^2$-divergence under the action of a completely positive map is
investigated and contrasted to the contraction of the trace norm. In
this context we analyse different versions of quantum detailed balance
and, finally, give a geometric conductance bound to the convergence
rate for unital quantum Markov processes.
\end{abstract}

\maketitle

\section{Introduction}\label{intro}

The mixing time of a classical Markov chain is the time it takes for the chain to be close to its steady state distribution, 
starting from an arbitrary initial state. The ability to bound the mixing time is important, for example in the field 
of computer science, where the bound can be used to give an estimate for the running time of some probabilistic algorithm 
such as the Monte Carlo algorithm. The mixing time for a classical Markov process  $P_{ij}$, with $\sum_i P_{ij} = 1$ on 
the space of probability measures $\sym$ is commonly defined in terms of the one norm, 
$\| p \|_1 = \sum_i |p_i|$. Let $ \pi$ denote the fixed point of the classical Markov  process, i.e. $P\pi = \pi$, 
then the mixing time is defined as:
\be
	t_{mix}(\epsilon) = \min \left\{ n \; | \; \forall q \in \sym  \; , \; \| P^n \; q - \pi \|_1 < \epsilon \right\}.
\ee
A large set of tools has emerged over the years that allows to investigate the convergence rate of classical Markov chains \cite{Mtime}. 
One of the most prominent approaches \cite{Diaconis,Fill,Mihail} to bounding the mixing time of a Markov chain is based on 
the $\chi^2$-divergence \cite{Pears:chi2}. This divergence is defined for two probability distributions $p, q \in \sym$ as, 
\be \label{classchi2} 
\chi^2(p,q)=\sum_i\frac{(p_i-q_i)^2}{q_i}. 
\ee
The usefulness of the $\chi^2$-divergence for finding bounds to the mixing time of classical Markov 
chains arises from the fact that it serves as an upper bound to the one norm difference between 
two probability distributions, \be \| p  -  q \|_1^2 \leq \chi^2(p,q)\ee and allows for an easier access
to the spectral properties of the Markov chain. The  $\chi^2$-divergence is intimately related to the 
Kullback-Leibler divergence, or relative entropy,  $H(p,q)=\sum_i p_i(\log{p_i}-\log{q_i})$. In fact, it can be obtained 
directly from the relative entropy as the approximating quadratic form, i.e. as the Hessian, of the latter: 
\be
\chi^2(p,q) = \left . -\frac{\partial^2}{\partial_\alpha \partial_\beta}H(q + \alpha(p - q) ,q + \beta(p - q)) \right .|_{\alpha = \beta = 0}. \label{chientcl}
\ee
The $\chi^2$ divergence was first introduced by Karl Pearson in the context of statistical inference tests, the most widely 
used of which is the "Pearson's $\chi^2$ test". Its computational simplicity and its clear relation to other distance 
measures have made it one of the most studied divergence measures in the literature.

In this paper, we find convergence bounds for arbitrary quantum Markov chains, also called quantum channels, 
with a technique that can be seen as a generalization of  the work of \cite{Diaconis,Fill,Mihail} to non-commutative probability spaces.
A prototypical example of mixing time in physics is the decoherence time of the underlying quantum process, i.e. the time in which quantum states 
decohere to an (often classical) mixture given a specific underlying noise model.  The ability to bound the mixing time for quantum 
processes also turns out to be relevant when one seeks to give bounds on the runtime of quantum algorithms 
that are based on quantum Markov chains \cite{dissFrank,quantMet}. Other applications of such bounds can be found in 
the framework of matrix product states \cite{fcsFannes,MPS}, where the correlation length of the quantum state is
connected to the convergence of the corresponding transfer operator that can be interpreted as a quantum channel.
In this article, we introduce the mathematical framework necessary to extend the classical mixing time results to the quantum setting. 
In particular, we introduce a new divergence measure - the quantum $\chi^2$-divergence - for quantum states and use it to obtain some 
basic convergence bounds that mirror existing classical ones. Furthermore, we extend the classical concept of detailed balance to the 
quantum setting and discuss its relevance in general terms. \\

The paper is organized as follows; 
The remainder of section \ref{intro} is devoted to setting the notation and to recalling the framework of quantum channels. 
Then in section \ref{chi2}, we introduce the quantum $\chi^2$-divergence, and prove some basic properties relating it to other 
divergence and distance measures. In particular we focus on a specific subfamily of interest. In section \ref{contraction}, 
we consider contraction of the $\chi^2$-divergence under the action of a channel, and relate it to trace-norm contraction. 
 Furthermore, we prove some fundamental quantum mixing time results, whose classical analogues are well known.  
 In section \ref{detailedbalance}, we study quantum detailed balance, and in section \ref{cheeger}, 
 we extend an important classical geometric mixing bound (Cheeger's inequality) to the quantum setting.
 Conclusions are drawn in \ref{concl}.

\subsection{Formal setting and notation}
Throughout this paper we will consider linear maps from the complex $d$-dimensional 
matrix algebra $\M_d$ to itself. The states are density matrices $\rho \in \sym_d$,
where $\sym_d = \left \{ \rho \in \M_d | \rho = \rho^{\dagger}, \rho \geq 0, \tr[\rho] = 1 \right \}$,
acting on $\HS = \cc^d$. The set of pure states is denoted $\sym^1_d$, while the set of positive definite  
states is denoted $\sym_d^+$. Note, that $\M_d$ itself  becomes a Hilbert space when equipped 
with the standard  Hilbert-Schmidt scalar product $\langle A| B \rangle \equiv \tr[A^\dagger B]$; 
this Hilbert space is naturally isomorphic to $M_d \simeq \cc^{d^2}$. The eigenvalues and 
singular values of $T$ are understood in terms of the matrix representation $\hat{T} \in \M_{d^2}$
of $T$ acting on $\cc^{d^2}$. The matrix representation of a quantum operation, which will 
always be written with a hat (ex. $\hat{T}$),  is given in terms of some 
complete orthonormal basis $\{ F_{i} \}_{i=1 \ldots d^2}$ of $\M_d$, where its matrix 
elements are $\hat{T}_{ij} = \langle F_{i} | T |  F_{j} \rangle$. Unless otherwise specified, we consider 
the basis of matrix units. The distance between states
$\rho_1,\rho_2 \in S_d$ will be measured in terms of the trace distance $\| \rho_1 - \rho_2 \|_1 $ 
induced by the trace norm $\| A \|_1 = \tr[\sqrt{A^{\dagger}A}]=\sum_i s_i(A)$ for $s_i(A)$ the singular 
values of $A \in \M_d$.  Time discrete quantum Markovian dynamics are described by
completely positive, trace-preserving maps (cpt-maps, or quantum channels) \cite{Holevo} $T :\M_d \mapsto \M_d$. 
Due to the Kraus representation theorem, every cp-map can be expressed in terms of the Kraus operators $A_\mu \in M_d$,  as
\be T(\rho) = \sum_\mu A_\mu \rho A_\mu^\dagger \;\;\; \mbox{with,} \;\;\; \sum_\mu A_\mu^{\dagger}A_\mu = \1, \ee
The dual map $T^*$ can be seen as the hermitian conjugate of $T$ with respect to the Hilbert-Schmidt scalar product.
In the above matrix representation of the map, this corresponds exactly to taking the hermitian conjugate $\hat{T}^\dagger$.  A quantum channel
is called unital, or doubly stochastic, when the dual map $T^*$ is also trace preserving. This immediately implies that
$T$ has a fixed point $\sigma = \1/\tr{[\1]}$. We will also consider time continuous quantum Markov
processes, described by a one-parameter semi-group \be T_t(\rho)=e^{t\lio}(\rho)\ee
The Liouvillian $\lio$, obeys $\lio^*(\1) = 0$ \cite{Lindblad}, and its matrix representation $\hat{\lio}$ is obtained as in the time-discrete case. 
In this article, we shall often consider \textit{primitive} quantum channels, i.e. channels with a unique maximal rank fixed point, 
and a unique eigenvalue of magnitude $1$ \cite{Sanz}.

%
% Section: The Quantum chi^2-divergence 
%
\section{The quantum $\chi^2$-divergence}\label{chi2}

We want to define a generalization of the classical $\chi^2$-divergence to the case when we are working on spaces 
with non-commuting density matrices. We shall require that any generalization to the setting of density matrices 
satisfies the condition that when the inputs are diagonal, the classical $\chi^2$-divergence is recovered. 
The first observation we make, reading straight off from (\ref{classchi2}), is that the classical $\chi^2$-divergence can be seen as an inner product on the probability 
space weighted with the inversion of the distribution $q_i$. Due to the non-commutative nature of density matrices there 
is no unique generalization of this inversion. Consider for instance a generalization for two density matrices $\rho,\sigma \in S_d$,
where for now we assume $\sigma$ to be full rank,  that is given by 
\bea\label{alpha-fam}
\chi^2_\alpha(\rho,\sigma) = \tr \left [(\rho-\sigma)\sigma^{-\alpha} (\rho-\sigma)\sigma^{\alpha-1} \right ] 
= \tr \left[ \rho \sigma^{-\alpha} \rho \sigma^{\alpha -1} \right] - 1.
\eea 
This gives rise to an entire family of $\chi^2$-divergences with (as we see below) special properties, for every $\alpha \in [0,1]$. 
The natural question of whether there exists a classification of all possible inversions of $\sigma$, 
was investigated in a series of papers by Morozova and Chentsov \cite{Morozova} Petz \cite{Petz1,Petz2,Petz3}, in the context of information geometry. 
They considered the characterization of monotone Riemannian metrics on matrix spaces. 
Their general definition is based on the modular operator formalism of Araki \cite{Araki1,Araki2}, 
which we will also consider here. In order to classify the valid inversions, we first need to define the 
following set of functions, each of which gives rise to a possible inversion: 
\be
 \K = \{k|-k \mbox{ is operator monotone, } k(w^{-1})=wk(w), \mbox{ and } k(1)=1\}.
\ee

Now, we define left and right multiplication operators as $L_Y(X)=YX$ and $R_Y(X)=XY$ respectively. The \textit{modular operator} is defined as 
\be
	 \Delta_{\rho,\sigma}=L_\rho R^{-1}_\sigma,
\ee 
for all $\rho,\sigma \in \sym_d$, $\sigma >0$. Note, that $R_\sigma$ and $L_\rho$ commute and inherit hermicity and positivity from $\rho,\sigma$. The above should be read as follows: acting on some $A \in M_d$, $\Delta_{\rho,\sigma}(A)=\rho A\sigma^{-1}$. When manipulating the modular operators it is often convenient to write them in matrix form, in wich case, they read: $\hat{\Delta}_{\rho,\sigma}\ket{A} = \rho \otimes \overline{\sigma}^{-1} \ket{A}$, where $\ket{A} = A \otimes \1 \ket{I}$, and $\ket{I} = \sum_{i=1}^d \ket{ii}$ corresponds to $d$ times the maximally mixed state. This formalism gives rise to a more general quantum $\chi^2$-divergence. 
\begin{definition}
For  $\rho,\sigma \in \sym_d$, and $k \in \K$ we define the the quantum $\chi^2$-divergence 
\be \label{chi2def} \chi^2_k(\rho,\sigma)=\left \langle \rho-\sigma,\Omega^k_\sigma(\rho-\sigma) \right \rangle, \ee
when $supp(\rho) \subseteq supp(\sigma)$, and infinity otherwise. The inversion of $\sigma$ is defined only when $supp(\rho) \subseteq supp(\sigma)$, and given by
\be \label{genInv} \Omega^k_\sigma = R^{-1}_{\sigma}k(\Delta_{\sigma,\sigma}). \ee
\end{definition}
The functions  $k_\alpha(w) = \frac{1}{2}\left( w^{-\alpha} + w^{\alpha -1}\right)$, with $k_\alpha \in {\cal K}$,  
yield the  family of $\chi^2_\alpha$-divergences  given in \eqref{alpha-fam} which we call the
mean $ \alpha$-divergences  to distinguish them from the well-known family of WYD $ \alpha$-divergences, 
which are described in Appendix \ref{appA}, along with several other families. Although we focus on the family \eqref{alpha-fam},  
most of our results hold for  any divergence given by \eqref{chi2def} with $k \in {\cal K}$  with the exceptions of Theorem ~\ref{contrlb}.\\

\subsection{Monotone Riemannian metrics and generalized relative entropies.}

This definition of the $\chi^2$-divergence stems from the analysis of monotone Riemannian metrics.  
By Riemannian metric, we mean a positive definite bilinear form $M_\sigma(A,B)$ on the hermitian tangent hyperplane 
$\mathcal{TP}=\{A \in \M_d: A=A^\dag,\tr[A]=0\}$. The metric is monotone if for all quantum channels 
$T :\M_d \mapsto \M_d$, states $\sigma \in \sym^+_d$ and $A \in \mathcal{TP}$, $M_{T(\sigma)}(T(A),T(A))\leq M_\sigma(A,A)$. 
Petz showed showed that there is a one-to-one correspondence between the above metrics and a special class 
of convex operator functions, which correspond to $1/k$ in our notation. He furthermore was able to relate several generalized 
relative  entropies (which he defined much earlier \cite{Petz0} and referred to as  quasi-entropies)
to monotone Riemannian metrics \cite{Petz2,Petz3,PetzNew}. The reverse implication, 
that every monotone Riemannian metric stems from a generalized relative entropy was first proved 
by Lesniewski and Ruskai \cite{LR}. Taking advantage of the well-known integral representations of operator monotone and convex functions \cite{Bhatia} 
one can express the $\chi^2$-divergences as well as the relative entropies explicitly. 
We shall briefly repeat the key points of the analysis that are necessary for our understanding of the mixing-time and contraction 
analysis for cpt-maps.  \\
We need to consider the class of functions $\mathcal{G}$ by which we denote the set of continuous operator
 convex functions  from $\mathbb{R}^+$ to $\mathbb{R}$ that satisfy $g(1)=0$. Note that these functions can all be classified in
terms of the integral representation:
\be
	g(w) = a(w-1) + b(w-1)^2 + c\frac{(w-1)^2}{w} + \int_0^{\infty} \frac{(w-1)^2}{w+s}d\nu(s),
\ee 
where $a,b,c >0$ and the integral of the positive measure $d\nu(s)$ on $(0,\infty)$ is bounded. 
The generalized relative entropy for states $\rho,\sigma \in S^+_d$ was first defined in \cite{Petz4,Petz5}. 
\begin{definition}
Let $g \in \mathcal{G}$. The generalized quantum relative entropy is given by 
\be H_g(\rho,\sigma) = \tr{[\rho^{1/2}g(\Delta_{\sigma,\rho})(\rho^{1/2})]} \ee
when $supp(\rho) \subseteq supp(\sigma)$, and infinity otherwise, and where $\Delta_{\rho,\sigma}$ is again the modular operator. 
\end{definition}
We now recall without proof a theorem \cite{Petz2,Petz3,LR} relating the relative entropy and the monotone Riemannian metric, mirroring the classical result (\ref{chientcl}):
\begin{theorem}
For every $k \in \mathcal{K}$, there is a $g \in \mathcal{G}$ such that for a given $\sigma \in \sym_d$, and $A,B$ hermitian traceless, 
we get:
\bea 
M^k_\sigma(A,B) &=& \left . -\frac{\partial^2}{\partial \alpha \partial \beta}H_g(\sigma+\alpha A, \sigma+\beta B) \right |_{\alpha = \beta = 0} \\\nonumber
			&=& \left \langle \;A \;,\Omega_\sigma^k(B) \; \right \rangle.
\eea
and, $k$ is related to $g$ by 
\be k(w)=\frac{g(w)+wg(w^{-1})}{(w-1)^2} \ee
\end{theorem}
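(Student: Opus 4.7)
The plan is to compute the mixed partial $-\partial_\alpha\partial_\beta H_g(\sigma+\alpha A,\sigma+\beta B)|_{\alpha=\beta=0}$ directly, express it in the joint spectral basis of $L_\sigma$ and $R_\sigma$, and match the resulting bilinear form against $\langle A,\Omega^k_\sigma(B)\rangle$, which will simultaneously fix $k$ in terms of $g$.

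First, I would reduce to elementary building blocks via the integral representation of $g\in\mathcal{G}$. The linear piece $a(w-1)$ evaluates on states to $a(\tr[\sigma_\beta]-\tr[\rho_\alpha])=0$ for traceless perturbations and drops out, so by bilinearity it suffices to analyze $h_s(w)=(w-1)^2/(w+s)$ for $s\in[0,\infty]$, with the $c$- and $b$-terms covered by $s=0$ and the $s\to\infty$ limit. Polynomial division gives $h_s(w)=w-(s+2)+(s+1)^2/(w+s)$, and the identities $\langle\rho^{1/2},\Delta_{\sigma',\rho}\rho^{1/2}\rangle=\tr[\sigma']=1$ and $\langle\rho^{1/2},\rho^{1/2}\rangle=\tr[\rho]=1$ collapse $H_{h_s}$ to a single resolvent expectation,
\[
H_{h_s}(\rho,\sigma')=-(s+1)+(s+1)^2\bigl\langle\rho^{1/2},(\Delta_{\sigma',\rho}+s)^{-1}\rho^{1/2}\bigr\rangle.
\]
I would then diagonalize $\sigma=\sum_i\sigma_i|i\rangle\langle i|$, so that the matrix units $|i\rangle\langle j|$ are eigenvectors of $\Delta_{\sigma,\sigma}$ with eigenvalues $\sigma_i/\sigma_j$, and compute the mixed second partial by first-order resolvent perturbation $(\Delta+s)^{-1}\mapsto(\Delta+s)^{-1}-(\Delta+s)^{-1}(\delta\Delta)(\Delta+s)^{-1}$ combined with the Fr\'echet derivative of $\rho\mapsto\rho^{1/2}$. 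After the bookkeeping, all contributions collapse to a spectral sum $\sum_{ij}\overline{A_{ij}}B_{ij}\,\kappa_g(\sigma_i,\sigma_j)$, which I would compare to $\langle A,\Omega^k_\sigma(B)\rangle=\sum_{ij}\overline{A_{ij}}B_{ij}\,k(\sigma_i/\sigma_j)/\sigma_j$ to read off $k$.

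To pin down the explicit formula $k(w)=(g(w)+wg(w^{-1}))/(w-1)^2$, I would exploit the fact that swapping the two arguments of $H_g$ corresponds to $\Delta\mapsto\Delta^{-1}$, which at the level of scalar generators becomes $g(w)\mapsto wg(w^{-1})$. Because $-\partial_\alpha\partial_\beta$ is symmetric in which slot $A$ and $B$ occupy, only the symmetrized combination $g(w)+wg(w^{-1})$ survives in the Hessian kernel, and factoring out the $(w-1)^2$ prefactor built into the definition of $\mathcal{G}$ yields the claimed identity. The resulting $k$ then automatically satisfies $k(w^{-1})=wk(w)$, so $k\in\mathcal{K}$ as required.

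The main obstacle is the mixed bookkeeping in the perturbative step: $(\Delta_{\sigma_\beta,\rho_\alpha}+s)^{-1}$ depends on both $\alpha$ and $\beta$, and the outer factors $\rho_\alpha^{1/2}$ contribute further first-order pieces that must be combined carefully. The nontrivial check is that all cross contributions from $\delta\Delta$ and $\delta\rho^{1/2}$ assemble into the single symmetric kernel $k(\sigma_i/\sigma_j)/\sigma_j$ rather than an asymmetric variant; this is precisely where the defining constraint $k(w^{-1})=wk(w)$ is forced by the computation.
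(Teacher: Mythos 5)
The paper does not actually prove this theorem: it is explicitly ``recalled without proof'' from \cite{Petz2,Petz3,LR}, so there is no internal argument to compare yours against. Your plan is, in substance, the Lesniewski--Ruskai derivation: linearize over the integral representation of $g$, reduce each building block $h_s(w)=(w-1)^2/(w+s)$ to a resolvent expectation (the polynomial division and the identities $\langle\rho^{1/2},\Delta_{\sigma',\rho}(\rho^{1/2})\rangle=\tr[\sigma']$ and $\langle\rho^{1/2},\rho^{1/2}\rangle=\tr[\rho]$ are correct, and they also show why the linear term $a(w-1)$ contributes nothing), then differentiate by first-order resolvent perturbation in the eigenbasis of $\sigma$. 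Carried through, this reproduces the integral representation (\ref{intOmega}) of $\Omega^k_\sigma$ and hence the kernel $k(\sigma_i/\sigma_j)/\sigma_j$, so the route is viable.

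Two steps are thinner than your own caveat suggests. First, the claim that ``$-\partial_\alpha\partial_\beta$ is symmetric in which slot $A$ and $B$ occupy'' is not automatic: equality of mixed partials does not exchange which perturbation sits in which argument of $H_g$, and a bilinear form $(A,B)\mapsto\partial_\alpha\partial_\beta F(\sigma+\alpha A,\sigma+\beta B)$ is symmetric under $A\leftrightarrow B$ only if $F$ has the corresponding swap symmetry. What the identity $H_g(\rho,\sigma)=H_{\hat g}(\sigma,\rho)$ with $\hat g(w)=wg(w^{-1})$ actually yields is $M_g(A,B)=M_{\hat g}(B,A)$; to conclude that only $g+\hat g$ enters you must either complete the spectral computation for a general non-symmetric $g$ and observe that the resulting kernel is already the symmetrized one, or first establish that $M_g(A,B)=\langle A,\Omega(B)\rangle$ with $\Omega$ self-adjoint --- which is the computation again. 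So this should be presented as an output of the perturbative bookkeeping, not as an a priori symmetry. Second, the theorem is quantified ``for every $k\in\mathcal{K}$ there is a $g\in\mathcal{G}$,'' and your sketch only runs from $g$ to $k$. The converse direction needs an explicit construction, e.g. $g(w)=\tfrac12(w-1)^2k(w)$, which is its own symmetrization precisely when $k(w^{-1})=wk(w)$, together with a verification that this $g$ is operator convex (hence in $\mathcal{G}$) whenever $-k$ is operator monotone; that verification is part of the content of \cite{LR} and should not be waved through.
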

From this theorem follows a convenient integral representation of the inversion $\Omega_\sigma^k$, which is equivalent to (\ref{genInv}) \cite{LR}. 
\bea \label{intOmega}
	\Omega_\sigma^k = \int_0^{\infty} \left(\frac{1}{sR_\sigma + L_\sigma} + \frac{1}{R_\sigma + s L_\sigma}\right) N_g(s) ds, 
\eea
where $N_g$ denotes the singular measure $N_g(s) ds = (b_g + c_g)\delta(s)ds + d\nu_g(s)$. Note, that the relationship 
between k and g is not one-to-one. Indeed, by setting $\hat{g}(w)=wg(w^{-1})$, we get back the above relation. 
However, there is a one-to-one correspondence between each $k$ and a symmetric $g_s(w)=g(w)+wg(w^{-1})$, 
and hence between each metric and a symmetric relative entropy. \\
Note that the $\alpha$-subfamily of  (\ref{alpha-fam}) has the 
associated symmetric relative entropy: 
$g^{sym}(x)=\frac{(1-w)^2}{2}\left( w^{\alpha -1} + w^{-\alpha}\right)$, 
so that
\be 
H^{sym}_\alpha(\rho,\sigma)=\frac{1}{2}(H_\alpha(\rho,\sigma)+H_\alpha(\sigma,\rho))
\ee
where,
\be 
H_\alpha(\rho,\sigma)=\tr{[\rho^{2-\alpha}\sigma^{\alpha-1}+\rho^{1+\alpha}\sigma^{-\alpha}-2\rho^\alpha\sigma^{1-\alpha}]}. \notag
\ee
The integral representation (\ref{intOmega}) of the inversion $\Omega_\sigma^k$ allows for a partial ordering of different monotone Riemannian
metrics that follows from the set of inequalities:
\be 
\frac{2}{x+1} \leq \frac{1+s}{2}(\frac{1}{s+x}+\frac{1}{sx+1}) \leq \frac{x+1}{2x}.
\ee
for $s \in [0,1]$, and $x \in \R^+$. 
We therefore see that there exists a partial ordering for the inversions, with a lowest and highest element in the hierarchy. The lowest element
gives rise to the so called Bures metric. Thus,
\be
\Omega^{Bures}_\sigma = 2({R}_\sigma +{L}_\sigma)^{-1} \leq {\Omega}^k_\sigma \leq ({L}^{-1}_\sigma +{R}^{-1}_\sigma)/2 = \Omega^{\alpha=0}_\sigma 
\ee  
The $\chi^2$-divergence is recovered from the metric upon setting
$\chi^2_k(\rho,\sigma) \equiv M_\sigma^k(\rho-\sigma,\rho-\sigma)$. We are therefore left with a partial order for all possible
$\chi^2$-divergences with a smallest and largest element according to,
\be\label{hierarchi} 
	\chi^2_{Bures}(\rho,\sigma) \leq  \chi^2_{k}(\rho,\sigma) \leq \chi^2_{\alpha=0}(\rho,\sigma).
\ee
The defining attribute of the above set of metrics is their monotonicity under the action of quantum channels.
This was first shown by Petz \cite{Petz3}, and later a proof based on the integral representation of $\Omega_\sigma^k$ 
(\ref{intOmega}), and on Schwarz-type inequalities, was provided by Ruskai and Lesniewski in \cite{LR}. 
Due to its importance for the mixing time analysis we shall repeat it here. 
\begin{theorem}
\label{Theo:chi2-mono} 
For all $\sigma \in \sym_d$, $M^k_\sigma$ is monotone under the action of a quantum channel 
$T:\M_d\rightarrow\M_{d}$  for all $k \in \K$ and $A \in \M_d$,  i.e.
\be 
M^k_\sigma(A,A) \geq M^k_{T(\sigma)} \left(T(A),T(A)\right)
\ee 
\end{theorem}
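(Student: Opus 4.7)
The plan is to use the integral representation \eqref{intOmega} to decompose $\Omega^k_\sigma$ into a positive superposition of elementary kernels and then prove monotonicity separately for each kernel. Since $M^k_\sigma(A,A)=\langle A,\Omega^k_\sigma(A)\rangle$ and the measure $N_g(s)\,ds$ is positive, the theorem follows once we establish, for every $s\geq 0$,
\[
\langle A,(sR_\sigma+L_\sigma)^{-1}A\rangle \;\geq\; \langle T(A),(sR_{T(\sigma)}+L_{T(\sigma)})^{-1}T(A)\rangle,
\]
and the companion bound for $(R_\sigma+sL_\sigma)^{-1}$ obtained by swapping $L$ and $R$; both are handled by the same argument.

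The key device is a variational (Dirichlet-form) characterization of the kernel. Completing the square against the strictly positive superoperator $K_\sigma:=sR_\sigma+L_\sigma$, whose action reads $K_\sigma(B)=sB\sigma+\sigma B$, gives
\[
\langle A,K_\sigma^{-1}(A)\rangle \;=\; \sup_{B}\Big\{\,2\,\mathrm{Re}\,\tr[A^*B]\;-\;s\,\tr[\sigma B^*B]\;-\;\tr[\sigma BB^*]\,\Big\},
\]
with the supremum attained at $B=K_\sigma^{-1}(A)$. This recasts monotonicity as a comparison of two suprema.

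Now invoke Schwarz. Because $T$ is completely positive and trace-preserving, its adjoint $T^*$ is unital and completely positive, hence obeys the Kadison-Schwarz inequality $T^*(X^*X)\geq T^*(X)^*T^*(X)$. Given any trial $B'$ in the variational formula for $(T(A),T(\sigma))$, set $B:=T^*(B')$. Then $\tr[T(A)^*B']=\tr[A^*B]$, while
\[
\tr[T(\sigma)B'^*B']=\tr[\sigma T^*(B'^*B')]\;\geq\;\tr[\sigma B^*B],
\]
and, using that $T^*$ preserves the adjoint, $\tr[T(\sigma)B'B'^*]=\tr[\sigma T^*(B'B'^*)]\geq\tr[\sigma BB^*]$. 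Summing with the correct signs shows that the variational functional at $(T(A),T(\sigma),B')$ is bounded above by the functional at $(A,\sigma,B)$, whose supremum is exactly $\langle A,K_\sigma^{-1}(A)\rangle$. Taking the supremum over $B'$ yields the inequality, and integration against $N_g$ delivers the full theorem.

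The main obstacle I anticipate is bookkeeping around supports: the integral representation presumes $\sigma>0$, and the formulas for the kernels become singular when $\sigma$ or $T(\sigma)$ is rank-deficient. I would handle this by a regularization $\sigma\mapsto\sigma+\epsilon\1$, carrying out the variational-plus-Kadison-Schwarz argument in the invertible regime, and then sending $\epsilon\to 0$ while checking that both sides remain well defined under the assumption $\mathrm{supp}(\rho-\sigma)\subseteq\mathrm{supp}(\sigma)$ implicit in Definition~\ref{chi2def}. The genuine analytical content is simply Kadison-Schwarz applied term-by-term to a Dirichlet form.
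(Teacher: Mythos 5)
Your proposal is correct and follows essentially the same route as the paper: reduce to the elementary kernels $(sR_\sigma+L_\sigma)^{-1}$ via the integral representation \eqref{intOmega}, then prove the per-kernel inequality using the Kadison--Schwarz inequality $T^*(X^\dagger X)\geq T^*(X)^\dagger T^*(X)$ for the unital CP map $T^*$. Your variational (Dirichlet-form) characterization of $\langle A,K_\sigma^{-1}A\rangle$ is just a cleaner packaging of the paper's completing-the-square step in Appendix Theorem~\ref{proofMono}, where the authors expand $\tr[X^\dagger X]\geq 0$ with the optimal trial element $B=[R_{T(\sigma)}+sL_{T(\rho)}]^{-1}T(A)$ made explicit.
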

\proof{ The monotonicity follows immediately from the integral representation of the inversion  $\Omega_\sigma^k$ in 
 (\ref{intOmega}), and an argument proved in the appendix,  Theorem \ref{proofMono}. The theorem states that for every 
 channel $T$ and for arbitrary $A$, we have 
 \be
 	\tr\left[A^\dagger \frac{1}{R_{\sigma} + sL_{\sigma}} A \right] = \tr\left[T\left(A^\dagger \frac{1}{R_{\sigma} + sL_{\sigma}} A\right) \right]  \geq
	\tr\left[T(A)^\dagger \frac{1}{R_{T(\sigma)} + sL_{T(\sigma)}} T(A) \right].
 \ee
\qed}

\subsection{Properties of the quantum $\chi^2$-divergence}
The fact that the quantum $\chi^2_k$-divergence can be used to bound the mixing time lies in the following Lemma, 
that upper bounds the trace distance which is the relevant distance measure in the mixing time definition.
\begin{lemma} \label{trbound}
For every pair of density operators $\rho, \sigma \in S_d$, we have that
\be ||\rho-\sigma||_1^2\leq\chi^2_k(\rho,\sigma)\ee
\end{lemma}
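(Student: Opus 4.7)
First, if $\mathrm{supp}(\rho)\not\subseteq\mathrm{supp}(\sigma)$ then $\chi^2_k(\rho,\sigma)=+\infty$ by definition and the inequality is vacuous, so I assume $\mathrm{supp}(\rho)\subseteq\mathrm{supp}(\sigma)$ and work on $\mathrm{supp}(\sigma)$; thus we may treat $\sigma$ as invertible. By the partial ordering \eqref{hierarchi}, $\chi^2_{Bures}(\rho,\sigma)\le \chi^2_k(\rho,\sigma)$ for every $k\in\K$, so it suffices to prove the strongest instance
\be
\|\rho-\sigma\|_1^{\,2}\;\le\;\chi^2_{Bures}(\rho,\sigma).
\ee

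Next, I would diagonalize $\sigma=\sum_i s_i\proj{i}$ and write everything in this basis. With $A=\rho-\sigma$ and matrix elements $A_{ij}=\langle i|A|j\rangle$, the Bures inversion $\hat\Omega^{Bures}_\sigma=2(\hat L_\sigma+\hat R_\sigma)^{-1}$ acts diagonally on matrix units, yielding the explicit form
\be
\chi^2_{Bures}(\rho,\sigma)\;=\;\sum_{ij}\frac{2\,|A_{ij}|^2}{s_i+s_j}.
\ee

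The main step is a weighted Cauchy--Schwarz argument dualizing the trace norm. Using $\|A\|_1=\sup_{\|U\|_\infty\le 1}|\tr[UA]|$ and the identity $\tr[UA]=\sum_{ij}U_{ij}A_{ji}$, I split each summand symmetrically as $\bigl(U_{ij}\sqrt{(s_i+s_j)/2}\bigr)\bigl(\sqrt{2/(s_i+s_j)}\,A_{ji}\bigr)$ and apply Cauchy--Schwarz to get
\be
|\tr[UA]|^2\;\le\;\Bigl(\sum_{ij}\tfrac{s_i+s_j}{2}|U_{ij}|^2\Bigr)\cdot \Bigl(\sum_{ij}\tfrac{2|A_{ji}|^2}{s_i+s_j}\Bigr)\;=\;\Bigl(\sum_{ij}\tfrac{s_i+s_j}{2}|U_{ij}|^2\Bigr)\cdot\chi^2_{Bures}(\rho,\sigma).
\ee

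Finally I would bound the weight prefactor: the sum equals $\tfrac12\bigl(\tr[\sigma\,UU^\dagger]+\tr[\sigma\,U^\dagger U]\bigr)$, and the contractivity $\|U\|_\infty\le 1$ forces $UU^\dagger\le\1$ and $U^\dagger U\le\1$, so each trace is $\le \tr[\sigma]=1$ and the prefactor is $\le 1$. Taking the supremum over $U$ then yields $\|\rho-\sigma\|_1^{\,2}\le \chi^2_{Bures}(\rho,\sigma)\le\chi^2_k(\rho,\sigma)$, as required.

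The only real subtlety is choosing the Cauchy--Schwarz weights so that one factor collapses exactly to the Bures $\chi^2$ while the other is controlled by the normalization $\tr[\sigma]=1$; everything else is bookkeeping and mirrors the standard classical proof that $\|p-q\|_1^2\le\chi^2(p,q)$.
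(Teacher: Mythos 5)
Your proof is correct and follows essentially the same route as the paper's: both reduce to the Bures case via the ordering $\chi^2_{Bures}\le\chi^2_k$, dualize the trace norm over unitaries (or contractions), and apply a weighted Cauchy--Schwarz inequality in which the $U$-dependent factor collapses to $\tfrac12(\tr[\sigma UU^\dagger]+\tr[\sigma U^\dagger U])\le 1$. The only difference is presentational: you write the argument in the eigenbasis of $\sigma$, while the paper phrases it basis-free in terms of the superoperators $[\Omega^k_\sigma]^{\pm 1/2}$.
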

\proof{
If the support of $\rho$ is not contained in the support of $\sigma$, then the right hand side is $\infty$. We can therefore assume w.l.o.g. that $\sigma >0$ by restricting the analysis to the support space of $\sigma$. The trace norm $\| A \|_{1}$ of some matrix $A \in \M_d$ can be expressed as \cite{HornJohnson}$\|A\|_{1} = \max_{U \in U(d)} \tr[U A]$, 
where the maximum is taken over all unitaries acting  on the $d$-dimensional Hilbert space.  
Thus, for any inversion $\Omega^k_\sigma$:
\bea
	\|A\|_{1}^2 &=&    \max_{U \in U(d)}\tr[U A]^2  = \max_{U \in U(d)}\tr\left[U [\Omega^k_\sigma]^{-1/2}\circ [\Omega^k_\sigma]^{1/2} (A)\right]^2 \notag \\
			 &=&   \max_{U \in U(d)}\tr\left[[\Omega^k_\sigma]^{-1/2}(U) [\Omega^k_\sigma]^{1/2} (A)\right]^2 \\
			&\leq&  \tr\left[A^{\dagger} \Omega_\sigma^k(A)\right]\max_{U \in U(d)}\tr\left[U^{\dagger}[\Omega^k_\sigma]^{-1}(U)\right] \notag
\eea
Let us consider the Bures inversion given by $\Omega_\sigma^{Bures} = 2\left[L_\sigma + R_\sigma \right]^{-1}$. 
Clearly, its inverse is  $\left[ \Omega_\sigma^{Bures} \right]^{-1} = \frac{1}{2}\left[L_\sigma + R_\sigma \right]$. Therefore, for any unitary U, 
\bea
\tr\left[U^{\dagger}[\Omega^{Bures}_\sigma]^{-1}(U)\right] = \frac{1}{2} \left( \tr[U^\dagger \sigma U] + \tr[U^\dagger U \sigma] \right) = 1.
\eea
Setting $A = \rho -\sigma$ and observing that $\chi^2_{Bures} \leq \chi^2_k$ for all $k \in \K$ completes the proof.
\qed}

\begin{comment}
We are also able to bound the relative entropy in terms of the $\chi^2$-divergence.

\begin{proposition} \label{relentbd}
For every pair of density operators $\rho, \sigma \in S_d$, we have that
\be S(\rho,\sigma) \leq \chi^2_k(\rho,\sigma),\ee
where $S(\rho,\sigma)=\tr{[\rho(\log{\rho}-\log{\sigma})]}$ is the usual relative entropy.
\end{proposition}
%
\proof{
In \cite{RuskaiStillinger}, it was shown that for any $\gamma\in(0,1]$,
\be S(\rho,\sigma) \leq \frac{1}{\gamma}(\tr{\rho^{1+\gamma}\sigma^{-\gamma}}-1)\ee

Now, consider the basis in which $\sigma$ is diagonal, and write
$\sigma = \sum_k \mu_k |k \rangle \langle k|$ and $\rho = \sum_{kl} r_{kl} |k \rangle \langle l|$. Then, 
\bea \chi^2_k(\rho,\sigma) - S(\rho,\sigma) &\geq& \chi^2_{Bures}(\rho,\sigma) - 2(\tr{\rho^{3/2}\sigma^{-1/2}}-1)\notag \\ 
&=& 2\sum_{kl} \frac{r_{kl}r_{lk}}{\mu_k+\mu_l} - r_{kk} - 2 (\frac{r^{3/2}_{kk}}{\sqrt{\mu_k}}-r_{kk})\\
&\geq& \sum_k (\frac{r_{kk}}{\sqrt{\mu_k}} -\sqrt{r_{kk}})^2 \geq 0\notag \eea
where the penultimate inequality is obtained by rearranging terms and by dropping the positive terms $\sum_{kl} r_{kl}r_{lk}/(\mu_l+\mu_k)$ with $k \neq l$. 
\qed}
\end{comment}

We have already stated that the family of $\chi^2_\alpha$-divergences defined in (\ref{alpha-fam}) can be cast into the general framework of monotone Riemannian metrics. Because of its computational simplicity, and its special symmetry when $\alpha=1/2$, we consider its properties more specifically. It is possible for instance 
to show monotonicity of this subfamily using arguments from joint convexity. As the proof
is interesting in its own right, we give it here:
\begin{proposition}
$\chi^2_\alpha$ is jointly convex in its arguments for $\alpha\in[0,1]$. 
Moreover, it is monotone w.r.t. completely positive trace-preserving maps, i.e.,
\be \chi^2_\alpha(\rho,\sigma)\geq \chi^2_\alpha\big(T(\rho),T(\sigma)\big),\ee for every
quantum channel $T:\M_d\rightarrow\M_{d}$.
\end{proposition}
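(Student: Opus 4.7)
My plan is to establish joint convexity first, and then to deduce monotonicity under arbitrary CPTP maps by a standard Lindblad--Uhlmann dilation argument. The first move is to unfold the square: using cyclicity of the trace together with $\tr[\rho]=\tr[\sigma]=1$, one immediately obtains
\be
\chi^2_\alpha(\rho,\sigma) = \tr[\rho\,\sigma^{-\alpha}\rho\,\sigma^{\alpha-1}] - 1,
\ee
so joint convexity of $\chi^2_\alpha$ is equivalent to joint convexity of the trace functional $\Phi_\alpha(\rho,\sigma)=\tr[\rho\,\sigma^{-\alpha}\rho\,\sigma^{\alpha-1}]$.

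The main technical step, and the one I anticipate as the principal obstacle, is the joint convexity of $\Phi_\alpha$. Performing the linear change of variables $X=\rho-\sigma$, $Y=\sigma$, one is reduced to the joint convexity of $(X,Y)\mapsto \tr[X\,Y^{-\alpha}X\,Y^{\alpha-1}]$ on the cone of pairs with $X$ Hermitian and $Y>0$. This is precisely the trace functional governed by the Lieb--Ando convexity theorem: for exponents $p,q\in[-1,0]$ with $p+q\in[-1,0]$, the map $(K,A,B)\mapsto \tr[K^{\dagger}\,A^{p}\,K\,B^{q}]$ is jointly convex. Taking $p=-\alpha$, $q=\alpha-1$ and $K=X$, $A=B=Y$ delivers the result. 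An alternative route I would keep as a backup uses the Ruskai--Lesniewski integral representation (\ref{intOmega}) of $\Omega^{k_\alpha}_\sigma$ to reduce the claim to the joint operator convexity of the matrix Kubo--Ando perspective associated with $(sR_\sigma+L_\sigma)^{-1}$, which is guaranteed by Effros' theorem; integrating against the positive measure $N_g(s)\,ds$ preserves joint convexity.

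Granted joint convexity, monotonicity under an arbitrary quantum channel $T$ follows via Stinespring. Write $T(\rho)=\tr_E[U(\rho \otimes \proj{0})U^{\dagger}]$ for an isometric dilation and verify three elementary properties of $\chi^2_\alpha$: (i) invariance under unitary conjugation, $\chi^2_\alpha(V\rho V^{\dagger}, V\sigma V^{\dagger}) = \chi^2_\alpha(\rho,\sigma)$, immediate from the definition; (ii) tensor factorisation with a fixed ancilla, $\chi^2_\alpha(\rho \otimes \tau, \sigma \otimes \tau) = \chi^2_\alpha(\rho,\sigma)$, which follows from cyclicity and $\tr[\tau]=1$; and (iii) monotonicity under the partial trace $\tr_E$. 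For (iii) one averages over a Weyl twirl on the environment: since $\frac{1}{d_E^2}\sum_j (\1 \otimes W_j)\rho_{SE}(\1 \otimes W_j)^{\dagger} = (\tr_E \rho_{SE}) \otimes \1_E/d_E$, joint convexity together with (i) gives $\chi^2_\alpha((\tr_E\rho_{SE}) \otimes \1_E/d_E,\, (\tr_E\sigma_{SE}) \otimes \1_E/d_E) \leq \chi^2_\alpha(\rho_{SE},\sigma_{SE})$, and (ii) identifies the left hand side with $\chi^2_\alpha(\tr_E \rho_{SE}, \tr_E \sigma_{SE})$. Chaining (i)--(iii) along the Stinespring decomposition of $T$ then yields the announced monotonicity $\chi^2_\alpha(T(\rho), T(\sigma)) \leq \chi^2_\alpha(\rho,\sigma)$.
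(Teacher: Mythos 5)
Your proof is correct and follows essentially the same route as the paper's: joint convexity of $\chi^2_\alpha$ via the Lieb--Ando joint convexity of the trace functional $\tr[K^{\dagger}A^{p}KB^{q}]$ with $p=-\alpha$, $q=\alpha-1$ (the paper cites Corollary 2.1 of Lieb for exactly this), followed by the standard Stinespring-dilation-plus-environment-twirl argument that upgrades joint convexity, unitary invariance, and tensor stability to monotonicity under CPTP maps. The only cosmetic differences are your preliminary unfolding of the square (which the subsequent change of variables undoes) and the mention of a backup route via the integral representation, neither of which changes the substance.
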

\proof{
A direct application of Corrolary 2.1 in \cite{Liebtensor} guarantees that $\chi_\alpha^2(\rho,\sigma)$ is jointly convex in its arguments for any $\alpha \in [0,1]$.  This in turn implies monotonicity w.r.t. cp-maps by a standard argument: let us represent $T$ as $T(\rho)={\rm tr}_E\big[U(\rho\otimes\psi)U^\dagger\big]$ where $\psi$ is a pure state (i.e. rank-one projection), $U$ a unitary and ${\rm tr}_E$ the partial trace over an `environmental' system of dimension $m$. If we take a unitary operator basis $\{V_i\}_{i=1,..,m^2}$ in $\M_m$ (orthonormal w.r.t. the Hilbert-Schmidt inner product), we can write
\be T(\rho)\otimes\1_m/m=\frac1{m^2}\sum_{i=1}^{m^2} (\1\otimes V_i)U(\rho\otimes\psi)U^\dagger(\1\otimes V_i^\dagger).\ee
However, since $\chi^2_\alpha\big(T(\rho),T(\sigma)\big)=\chi^2_\alpha\big(T(\rho)\otimes\tau,T(\sigma)\otimes\tau\big)$ in particular for $\tau=\1_m/m$, we can now apply joint convexity. With the help of the fact that for any unitary $W$ it holds that $(W\cdot W^\dagger)^\alpha=W(\cdot)^\alpha W^\dagger$ we obtain the claimed result.
\qed}  
Furthermore, we note that this subfamily also has a natural ordering. 
\begin{proposition}
For every $\rho,\sigma \in \sym_d$, $\chi^2_\alpha$ is convex in $\alpha$, and reaches a minimum for $\alpha=1/2$.
\end{proposition}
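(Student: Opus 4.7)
The plan is to reduce the statement to a one-parameter scalar calculation via spectral decomposition of $\sigma$, exploit the elementary convexity of exponentials, and then combine this with a symmetry $\alpha \leftrightarrow 1-\alpha$ to locate the minimum.

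First I would assume without loss of generality that $\sigma$ is positive definite (otherwise restrict to its support, on which $\rho$ must be supported or else $\chi^2_\alpha = \infty$ for every $\alpha$ and the statement is vacuously true). Diagonalizing $\sigma = \sum_k \mu_k \proj{k}$ and writing $A = \rho - \sigma = \sum_{kl} a_{kl}\ket{k}\bra{l}$ (Hermitian, so $a_{lk} = \overline{a_{kl}}$), a direct computation using the definition in \eqref{alpha-fam} yields
\be
\chi^2_\alpha(\rho,\sigma) \;=\; \sum_{k,l} |a_{kl}|^2 \, \mu_k^{\alpha-1}\mu_l^{-\alpha} \;=\; \sum_{k,l} \frac{|a_{kl}|^2}{\mu_k}\, e^{\alpha \ln(\mu_k/\mu_l)}.
\ee
Each summand is a non-negative multiple of an exponential in $\alpha$, hence convex in $\alpha$; since convexity is preserved under non-negative linear combinations, $\chi^2_\alpha(\rho,\sigma)$ is convex in $\alpha\in[0,1]$.

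For the location of the minimum I would use cyclicity of the trace:
\be
\chi^2_{1-\alpha}(\rho,\sigma) = \tr\!\left[(\rho-\sigma)\sigma^{\alpha-1}(\rho-\sigma)\sigma^{-\alpha}\right] = \tr\!\left[(\rho-\sigma)\sigma^{-\alpha}(\rho-\sigma)\sigma^{\alpha-1}\right] = \chi^2_\alpha(\rho,\sigma),
\ee
so $\chi^2_\alpha$ is symmetric about $\alpha=1/2$. A convex function on $[0,1]$ that is symmetric about the midpoint necessarily attains its minimum at the midpoint (either directly, or by noting that $\alpha=1/2$ is a stationary point of a convex function, hence a global minimizer). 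This concludes both claims.

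I do not foresee a serious obstacle: the only subtlety is handling the non-full-rank case, which is dispatched by restricting to the support of $\sigma$, and perhaps observing explicitly that the pairwise sum $\mu_k^{\alpha-1}\mu_l^{-\alpha}+\mu_k^{-\alpha}\mu_l^{\alpha-1}$ obtained from the $(k,l)$ and $(l,k)$ contributions is manifestly invariant under $\alpha\mapsto 1-\alpha$, which gives an alternative elementary route to the symmetry step without invoking trace cyclicity.
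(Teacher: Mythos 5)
Your proof is correct. The convexity half is in substance identical to the paper's: the paper also diagonalizes $\sigma$ and arrives at $\sum_{kl}\mu_k^{\alpha-1}\mu_l^{-\alpha}(\log\mu_k-\log\mu_l)^2|\langle k|\rho|l\rangle|^2\geq 0$ as the second $\alpha$-derivative, which is exactly the second derivative of your sum of exponentials $\sum_{kl}\frac{|a_{kl}|^2}{\mu_k}e^{\alpha\ln(\mu_k/\mu_l)}$ (the paper writes the quadratic form in $\rho$ rather than $\rho-\sigma$, but the difference is $\alpha$-independent, so this is immaterial). Where you genuinely diverge is in locating the minimum: the paper proves $\chi^2_{1/2}\leq\chi^2_\alpha$ by a direct Cauchy--Schwarz estimate on $\tr[\rho\sigma^{-1/2}\rho\sigma^{-1/2}]$, only noting the endpoint identity $\chi^2_{\alpha=0}=\chi^2_{\alpha=1}$ in passing, whereas you establish the full reflection symmetry $\chi^2_{1-\alpha}=\chi^2_\alpha$ by cyclicity of the trace and then conclude via midpoint convexity, $\chi^2_{1/2}\leq\tfrac12(\chi^2_\alpha+\chi^2_{1-\alpha})=\chi^2_\alpha$. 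Your route is arguably cleaner: it makes the minimum a corollary of the convexity you have already proved rather than a separate inequality, it makes the symmetry of the family explicit (which the paper only implies), and it avoids the somewhat awkwardly stated squared-trace Cauchy--Schwarz display in the paper. The only minor caveat is your parenthetical ``stationary point'' remark, which tacitly uses differentiability in $\alpha$; the midpoint-convexity argument you also give does not need it, so nothing is lost.
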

\proof{
First note that $\chi^2_{\alpha=0}(\rho,\sigma)=\chi^2_{\alpha=1}(\rho,\sigma)$. That the minimum is reached for $\alpha=1/2$ follows directly 
from the Cauchy-Schwarz inequality. Applied to our problem we get
\bea
   \tr{\left[\rho \sigma^{-1/2} \rho \sigma^{-1/2}\right]}^2 &=& \tr{\left[\rho \sigma^{(\alpha -1)/2} \sigma^{ -\alpha/2} \rho \sigma^{(\alpha -1)/2} \sigma^{ -\alpha/2} \right]}^2 
   \\\nonumber &\leq& \tr{\left[\rho\sigma^{-\alpha}\rho\sigma^{\alpha-1}\right]}^2
\eea

%$\tr{[A^{\dagger}]}^2 \leq \tr{[A^\dagger A]}\tr{[A^\dagger A]}$ which holds for all operators $A,B$. Applied to our problem  we get
%\bea \tr{\rho\sigma^{-\alpha}\rho\sigma^{\alpha-1}} &=& \tr{\big(\sigma^{\alpha-1/2}\big)^2\big(\sigma^{-\alpha/2}\rho\sigma^{-\alpha/2}\big)^2}\notag\\
%&\geq & \tr{\rho\sigma^{-1/2}\rho\sigma^{-1/2}}.\eea

To see convexity, consider the second partial derivative of $\chi^2_\alpha$ with respect to $\alpha$:
\bea \frac{\partial^2}{\partial \alpha^2}\chi^2_\alpha(\rho,\sigma)&=&\tr{\sigma^{\alpha-1}\rho \sigma^{-\alpha}(\rho \log^2{\sigma}+\log^2{\sigma}\rho-2\log{\sigma}\rho\log{\sigma})}\notag\\
&=& \sum_{kl} \mu^{\alpha-1}_k\mu^{-\alpha}_l(\log{\mu_k}-\log{\mu_l})^2|\langle k|\rho|l \rangle|^2 \geq 0\eea
where we used $\sigma=\sum_k \mu_k \ket{k}\bra{k}$.
 \qed}

Finally, we point out a bound on the relative entropy in terms of the $\alpha$-subfamily of $\chi^2$-divergences:

\begin{theorem} For every pair of density operators $\rho$ and $\sigma$ and every $\alpha\in(0,1]$ we have that
\be  \chi^2_\alpha(\rho,\sigma) \geq S(\rho,\sigma),\ee
where $S(\rho,\sigma) = \tr{\rho (\log{\rho}-\log{\sigma})}$ is the usual relative entropy.
\end{theorem}
\proof{
It was shown in \cite{RuskaiStillinger} that for $\gamma\in(0,1]$, the following holds:
\be S(\rho,\sigma) \leq \frac{1}{\gamma}(\tr{\rho^{1+\gamma}\sigma^{-\gamma}}-1)\ee \label{RSgamma}
Then consider,
\bea \chi_{\alpha}^{2}(\rho,\sigma)-S(\rho,\sigma) &\geq& \tr{\rho\sigma^{-1/2}\rho\sigma^{-1/2}}-2\tr{\rho^{3/2}\sigma^{-1/2}}+1 \notag\\
&=& \tr{(\rho^{1/2}\sigma^{-1/2}\rho^{1/2}-\rho^{1/2})^2} \geq 0\eea
where the first inequality comes from taking $\gamma=1/2$ in (\ref{RSgamma}), and $\alpha=1/2$ for $\chi_{\alpha}^2$, and the last line is obtained from rearranging terms.
\qed}

%
% Mixing time bounds and Contraction under CPT maps
%
\section{Mixing time bounds and Contraction of the $\chi^2$-divergence under CPT maps}
\label{contraction}

\subsection{Mixing time Bounds}
The $\chi^2$-divergence is an essential tool in the study of Markov chain mixing times, because on the one hand it bounds the trace distance, and on the other it allows easy access to the spectral properties of the map. The subsequent analysis can be seen as a generalization of the work presented in \cite{Diaconis,Fill} 
to the non-commutative setting. 

\begin{theorem}
\label{MixBounds}
 Let $T: M_d \mapsto M_d$ be an ergodic quantum channel with fixed point  $\sigma \in \sym_d$, 
 for any $\rho \in \sym_d$ and any $k \in \mathcal{K}$, we can bound
\be
 	\|T^{n}(\rho) - \sigma \|_{\tr} \leq (s^k_1)^{n}\sqrt{\chi_k^2(\rho,\sigma)}.
\ee
 Here $s^k_1$ denotes the second largest singular value (the largest being $1$) of the map 
\be Q_k =  [\Omega_\sigma^k]^{1/2} \circ T \circ [\Omega_\sigma^k]^{-1/2}\ee 
\end{theorem}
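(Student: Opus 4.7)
The plan is to reduce the trace-norm bound to a Hilbert–Schmidt norm bound on the conjugated map $Q_k$ using Lemma~\ref{trbound}, and then extract the contraction factor $s^k_1$ by identifying the top singular direction of $Q_k$ explicitly and showing that the relevant vector is orthogonal to it.

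\medskip\noindent
First I would apply Lemma~\ref{trbound} to the pair $(T^n(\rho),\sigma)$, which turns the claim into showing
\[
\chi^2_k\bigl(T^n(\rho),\sigma\bigr)\;\le\;(s^k_1)^{2n}\,\chi^2_k(\rho,\sigma).
\]
Writing $A=\rho-\sigma$ (which is traceless Hermitian and, crucially, $T^n(\rho)-\sigma=T^n(A)$ since $T(\sigma)=\sigma$), the definition \eqref{chi2def} gives
\[
\chi^2_k(T^n(\rho),\sigma)=\bigl\langle [\Omega^k_\sigma]^{1/2}T^n(A),\,[\Omega^k_\sigma]^{1/2}T^n(A)\bigr\rangle
=\bigl\|Q_k^{\,n}(B)\bigr\|_2^{\,2},
\]
where $B=[\Omega^k_\sigma]^{1/2}(A)$ and $\|B\|_2^2=\chi^2_k(\rho,\sigma)$. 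The identity $Q_k^{\,n}=[\Omega^k_\sigma]^{1/2}\circ T^n\circ[\Omega^k_\sigma]^{-1/2}$ is immediate by telescoping. So the task becomes a Hilbert–Schmidt contraction estimate $\|Q_k^{\,n}(B)\|_2\le (s^k_1)^n\|B\|_2$ for this particular $B$.

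\medskip\noindent
The key spectral step is to identify $v:=[\Omega^k_\sigma]^{1/2}(\sigma)$ as a common eigenvector, with eigenvalue $1$, of both $Q_k$ and its Hilbert–Schmidt adjoint $Q_k^{*}=[\Omega^k_\sigma]^{-1/2}\circ T^*\circ[\Omega^k_\sigma]^{1/2}$. For $Q_k$ this follows at once from $T(\sigma)=\sigma$. For $Q_k^{*}$ one uses that $T$ is trace-preserving, so $T^{*}(\1)=\1$, together with the fact that $\Omega^k_\sigma(\sigma)=R_\sigma^{-1}k(\Delta_{\sigma,\sigma})(\sigma)=R_\sigma^{-1}(\sigma)=\1$ (since $\Delta_{\sigma,\sigma}\sigma=\sigma$ and $k(1)=1$); hence $[\Omega^k_\sigma]^{-1/2}(\1)=[\Omega^k_\sigma]^{1/2}(\sigma)=v$, giving $Q_k^{*}(v)=v$. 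It follows that $v$ is an eigenvector of $Q_k^{*}Q_k$ with eigenvalue $1$, and monotonicity of $M^k_\sigma$ (Theorem~\ref{Theo:chi2-mono}) gives $\|Q_k\|_\infty\le 1$, so this eigenvalue is maximal. Consequently $v$ is the top right singular vector of $Q_k$ and its orthogonal complement $V$ (invariant under both $Q_k$ and $Q_k^{*}$, since $v$ is fixed by $Q_k^{*}$) carries a restriction of $Q_k^{*}Q_k$ whose largest eigenvalue is exactly $(s^k_1)^2$.

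\medskip\noindent
To conclude, I would check that $B=[\Omega^k_\sigma]^{1/2}(\rho-\sigma)$ lies in $V$: using the computation above, $\langle v,B\rangle=\langle \sigma,\Omega^k_\sigma(\rho-\sigma)\rangle=\tr[\rho-\sigma]=0$ (the identity $\tr[\sigma\,\Omega^k_\sigma(X)]=\tr X$ is immediate for $k=k_\alpha$ and extends to all $k\in\K$ via the integral representation \eqref{intOmega}, since each resolvent $(sR_\sigma+L_\sigma)^{-1}$ and $(R_\sigma+sL_\sigma)^{-1}$ satisfies the analogous trace identity). Since $V$ is $Q_k$-invariant we can iterate, obtaining $\|Q_k^{\,n}(B)\|_2\le(s^k_1)^n\|B\|_2$, and hence $\chi^2_k(T^n(\rho),\sigma)\le(s^k_1)^{2n}\chi^2_k(\rho,\sigma)$. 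Combined with Lemma~\ref{trbound} and taking square roots, this yields the theorem.

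\medskip\noindent
The main obstacle I expect is the spectral identification, specifically verifying $Q_k^{*}(v)=v$ and the orthogonality $\langle v,B\rangle=0$: both rest on the compatibility between the inversion $\Omega^k_\sigma$, the fixed point of $T$ and the unit preservation of $T^{*}$. Once these two structural identities are in hand, the passage from an invariant subspace of $Q_k^{*}Q_k$ to the singular-value contraction bound is standard and the remaining steps are purely algebraic.
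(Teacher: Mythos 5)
Your proof is correct and follows essentially the same route as the paper's: reduce to a weighted Hilbert--Schmidt contraction via Lemma~\ref{trbound}, use monotonicity (Lemma~\ref{specInterval}) to confine the spectrum of $Q_k^*Q_k$ to $[0,1]$, and use trace preservation to show the error vector has no component along the top singular direction $[\Omega_\sigma^k]^{1/2}(\sigma)=\sqrt{\sigma}$. The only difference is presentational: you establish once that the orthogonal complement of $\sqrt{\sigma}$ is $Q_k$-invariant and iterate the subspace contraction, whereas the paper iterates the one-step decrement inequality $\chi^2_k(n)-\chi^2_k(n+1)\geq(1-(s_1^k)^2)\chi^2_k(n)$, re-using $\tr[e(n)]=0$ at each step.
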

Before we prove Theorem \ref{MixBounds}, we would like to point out an important fact that regards the the singular 
values of $Q_k$. The monotonicity of the $\chi^2$-divergence ensures, that the singular values $s_i^k$ 
of $Q_k$ are always contained in $[0,1]$ irrespectively of the choice of $k \in \K$. Let us therefore prove the following:
\begin{lemma}\label{specInterval}
The spectrum of the map $S_k \equiv Q^*_k \circ Q_k = [\Omega_\sigma^k]^{-1/2} \circ T^* \circ \Omega_\sigma^k \circ T  \circ [\Omega_\sigma^k]^{-1/2}$ 
is contained in $[0, 1]$.  
\end{lemma}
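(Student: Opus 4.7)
The plan is to exploit the manifestly factored form $S_k = Q_k^* \circ Q_k$, where the adjoint is taken with respect to the Hilbert-Schmidt inner product on $\M_d$. First I would verify that $\Omega_\sigma^k$ is a self-adjoint, positive-definite operator on the Hilbert space $(\M_d, \langle \cdot, \cdot\rangle)$. Since $\sigma$ is hermitian, both $L_\sigma$ and $R_\sigma$ are self-adjoint; they commute, so $\Delta_{\sigma,\sigma} = L_\sigma R_\sigma^{-1}$ is self-adjoint, hence $k(\Delta_{\sigma,\sigma})$ is as well and commutes with the self-adjoint $R_\sigma^{-1}$. Therefore $\Omega_\sigma^k = R_\sigma^{-1} k(\Delta_{\sigma,\sigma})$ is self-adjoint, and it is positive-definite because it induces the Riemannian metric $M_\sigma^k$. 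Consequently $[\Omega_\sigma^k]^{\pm 1/2}$ are self-adjoint, and a direct computation yields
\be
Q_k^* \;=\; [\Omega_\sigma^k]^{-1/2} \circ T^* \circ [\Omega_\sigma^k]^{1/2},
\ee
so that $Q_k^* \circ Q_k = S_k$ holds on the nose.

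This factorization immediately gives the lower bound on the spectrum: $\langle A, S_k(A)\rangle = \|Q_k(A)\|^2 \geq 0$ for every $A\in\M_d$, and $S_k$ is self-adjoint, so its spectrum lies in $[0,\infty)$. For the upper bound I would show $\|Q_k\| \leq 1$ by changing variables. Set $B := [\Omega_\sigma^k]^{-1/2}(A)$; then the inequality $\langle A, S_k(A)\rangle \leq \langle A, A\rangle$ to be proved becomes
\be
\langle T(B),\, \Omega_\sigma^k \,T(B)\rangle \;\leq\; \langle B,\, \Omega_\sigma^k (B)\rangle .
\ee
Since $\sigma$ is a fixed point of $T$, we have $T(\sigma) = \sigma$ and $\Omega_{T(\sigma)}^k = \Omega_\sigma^k$, so this is precisely $M_\sigma^k(B,B) \geq M_{T(\sigma)}^k(T(B),T(B))$, i.e.\ the monotonicity statement of Theorem~\ref{Theo:chi2-mono} applied to the cpt-map $T$.

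The only subtlety worth flagging is that Petz-type monotonicity is sometimes phrased for hermitian traceless tangent vectors, whereas here $B = [\Omega_\sigma^k]^{-1/2}(A)$ is a generic element of $\M_d$. The version quoted as Theorem~\ref{Theo:chi2-mono} is however stated for all $A\in\M_d$, so no additional work is needed. Beyond that, the proof is essentially an algebraic book-keeping exercise: once the self-adjointness of $\Omega_\sigma^k$ and its square roots is established, the reading of $S_k$ as $Q_k^*Q_k$ on the Hilbert space $\M_d$ reduces the spectral bound to a single invocation of monotonicity.
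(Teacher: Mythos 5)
Your proposal is correct and follows essentially the same route as the paper: positivity and self-adjointness of $S_k$ from the factored form $Q_k^*\circ Q_k$, and the upper bound via the substitution $B=[\Omega_\sigma^k]^{-1/2}(A)$ reducing the Rayleigh quotient bound to the monotonicity of $M_\sigma^k$ (Theorem~\ref{Theo:chi2-mono}) combined with $T(\sigma)=\sigma$. The only difference is that you spell out the self-adjointness of $\Omega_\sigma^k$ and the computation of $Q_k^*$, which the paper leaves implicit under ``by construction.''
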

\proof{
Let us first note, that the map $S_k$ is hermitian and positive by construction. Furthermore, the monotonicity of the 
$\chi^2$-divergence, as stated in Theorem \ref{Theo:chi2-mono} ensures that the Rayleigh-Ritz quotient is bounded by $1$.
This holds, since $\forall B$
\bea
&&\avr{B ,S_k (B)} = \avr{A, T^* \circ \Omega_\sigma^k \circ T(A)} =
 M^k_{T(\sigma)}(T(A),T(A)) \leq \\\nonumber && M^k_\sigma(A,A) =  \avr{A ,\Omega_\sigma^k(A)}
 = \avr{B,B},
\eea 
where we defined the intermediate state $A = [\Omega_\sigma^k]^{-1/2}(B)$.  
Note that we made use of the fact that $\sigma  = T(\sigma)$ is the
fixed point of the map. Therefore 
\bea
\lambda_{max} = \max_{B \in \M_d}\frac{\avr{B ,S_k (B)}}{\avr{B,B}} \leq 1
\eea
and the maximum is attained for $\lambda_{max} = 1$ and $B_{max} =  [\Omega_\sigma^k]^{1/2}(\sigma)$.
\qed}
\vspace{0.5cm}
With the bound on the spectrum at hand, it is now straight forward to prove Theorem \ref{MixBounds}
\vspace{0.5cm}

 \proof{
 Define $e(n) \in M_d$, as $e(n) = T^n(\rho -\sigma)$. By Lemma \ref{trbound}, we get $\|e(n)\|_1^2 \leq \chi^2_{k}(T^n(\rho),T^n(\sigma)) \equiv \chi^2_k(n)$.
 In the matrix representation, $\ket{e(n)} = e(n) \otimes \1 \ket{I}$, we can rewrite $\chi^2_k(n) = \bra{e(n)}\; \hat{\Omega}_\sigma^k \; \ket{e(n)}$.
 Note that also, $\ket{e(n+1)} = \hat{T} \ket{e(n)}$ and so,
\bea
\chi^2_k(n) - \chi^2_k(n+1) &=& \bra{e(n)} \; \hat{\Omega}_\sigma^k \;  \ket{e(n)} -
\bra{e(n)} \hat{T}^\dagger \; \hat{\Omega}_\sigma^k \; \hat{T} \ket{e(n)}\\
&=& \bra{e(n)}\; [\hat{\Omega}_\sigma^k]^{1/2} \; \left( \1 - \hat{Q_k}^\dagger \hat{Q_k}\right)
 [\hat{\Omega}_\sigma^k]^{1/2} \ket{e(n)}.
\eea
Due to Lemma \ref{specInterval} we know that the spectrum of $\hat{S}_k = \hat{Q_k}^\dagger \hat{Q_k}$,  
which is equal to the square of the singular values of $\hat{Q_k}$,  is contained in the interval $[0,1]$. Hence,
\bea
\bra{e(n)} [\hat{\Omega}_\sigma^k]^{1/2} \left(\1 -\hat{S_k}\right) [\hat{\Omega}_\sigma^k]^{1/2} \ket{e(n)} \\
\geq (1-s_1^2) \bra{e(n)} [\hat{\Omega}_\sigma^k]^{1/2} \sum_{\alpha \neq 0} P_{\alpha} \; [\hat{\Omega}_\sigma^k]^{1/2} \ket{e(n)}.
\eea
The sum is taken over spectral projectors $P^k_{\alpha}$ of $ \hat{S}_k = \sum_{\alpha}(s^k_\alpha)^2 P_{\alpha}$,
apart from $P^k_0$ which projects onto $ [\hat{\Omega}_\sigma^k]^{-1/2} \ket{\sigma}$. In particular, $P_0^k=[\hat{\Omega}_\sigma^k]^{-1/2} \ket{\sigma} \bra{I}[\hat{\Omega}_\sigma^k]^{-1/2}$, so that $\bra{e(n)} [\hat{\Omega}_\sigma^k]^{1/2} P^k_0 \; [\hat{\Omega}_\sigma^k]^{1/2} \ket{e(n)} = \bra{e(n)}\sigma \rangle \tr{[T^n(\rho-\sigma)]}=0$, by trace preservation of $T$. We can write,
\be
\chi^2_k(n) - \chi^2_k(n+1) \geq (1-(s^k_1)^2)\chi^2_k(n).
\ee
Rearranging terms completes the theorem. \qed}
\vspace{0.5cm}
{\bf Remark:}
The fact, that the singular values of $Q_k$ are always smaller or equal to one justifies the use of the generalized $\chi^2$-divergence
as the appropriate distance measure to bound the convergence of an arbitrary channel. It is tempting to use the 
Hilbert-Schmidt inner product to give an upper bound to the trace norm. This can always be done at the cost of a dimension dependent prefactor,
since on finite dimensional spaces all norms are equivalent. However, when doing so a problem  arises if one tries to bound the convergence in terms 
of the spectral properties of the map $ S_{HS} = T^* \circ T$. It is in general not ensured that the spectrum will be bounded by one.  
In fact, for every non-unital channel $T$, $S_{HS}$ will have an eigenvalue larger than one \cite{PerezWolf}. The similarity transformation of the channel $T$
with $[\Omega^k_\sigma]^{1/2}$ alters the singular values, but of course leaves the spectrum invariant. Furthermore, it is a well known fact \cite{HornJohnson} that the singular values of a square matrix log-majorize the absolute value of the eigenvalues. As the spectrum of $Q_k$ is bounded by one (and equal that of $\hat{T}$ by similarity), we conclude that its second largest eigenvalue is always smaller or equal to its second largest singular value. One can also give a general bound in 
terms of the second largest eigenvalue of $T$ \cite{Teheral}, but one is then confronted with a potentially severe dimensional prefactor.
\\\\
For some instances of the inversion $\Omega^k_\sigma$ it becomes immediately evident  that the symmetrization $S_k$ 
has the desired spectral properties without making use of the monotonicity of the $\chi^2_k$-divergence. 
It can occur, that $S_k$ is again similar to a quantum channel that is of the form 
$T^k_s =  [\Omega^k_\sigma]^{-1/2} \circ S_k \circ [\Omega^k_\sigma]^{1/2}$.  
A possible example of such an inversions is $ \Omega_\sigma^{\alpha = 1/2} = L_\sigma^{-1/2}R_\sigma^{-1/2}$.
This is however not the generic case, most inversions will lead to maps that are not completely positive any longer.
It would be very desirable to find other such examples, as they mirror the classical situation where the 
symmetrized maps are always probability transition matrices, and because these specific inversions allow 
for clean contraction bounds as seen in section III.B.\\\\

It is clear from the discussion above that the singular values of ${Q}_k$  play a crucial role in the mixing time analysis presented here. 
This seems to contradict the general understanding that the convergence is determined by the spectral  properties of the channel $T$ 
in the asymptotic limit. This can however be understood as follows: the matrix $\hat{Q}_k$  is  similar to $\hat{T}$, i.e. 
$\hat{Q}_k = [\hat{\Omega}^k_\sigma]^{1/2} \cdot \hat{T} \cdot [\hat{\Omega}^k_\sigma]^{-1/2}$, so the spectra of $Q_k$ and $T$ coincide.
The following lemma establishes a relation between the singular values and the eigenvalues in the asymptotic limit. 
For a proof, see e.g.  \cite{HornJohnson2} pg.180.
\begin{lemma}
Let $\hat{Q}_k \in \M_{d^2}$ be given, and let ${s_0(\hat{Q}_k) \geq \ldots \geq s_{d^2-1}(\hat{Q}_k)}$ and $\{\lambda_i(\hat{Q}_k)\}_{i=0\ldots d^2-1}$
denote its singular values and eigenvalues, respectively with $|\lambda_0(\hat{Q}_k)| \geq \ldots \geq |\lambda_{d^2-1}(\hat{Q}_k) |$. Then
\be
\lim_{n \rightarrow \infty} [s_i(\hat{Q}_k^n)]^{1/n} = |\lambda_i(\hat{Q}_k)| \;\;\; \forall \; i=0 \dots d^2-1
\ee
\end{lemma}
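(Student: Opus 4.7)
The plan is to prove this as a classical matrix-analytic statement (often attributed to Yamamoto) by combining Gelfand's spectral radius formula with the multiplicative behaviour of compound (exterior power) matrices.

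First, I would dispatch the top singular value $i=0$ directly. Since $s_0(\hat{Q}_k^n)$ equals the operator norm $\|\hat{Q}_k^n\|$, Gelfand's formula $\rho(A) = \lim_{n\to\infty} \|A^n\|^{1/n}$ gives immediately $s_0(\hat{Q}_k^n)^{1/n} \to \rho(\hat{Q}_k) = |\lambda_0(\hat{Q}_k)|$.

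Second, I would lift this to all $i$ via compound matrices. Let $\Lambda^j(A)$ denote the $j$-th compound (exterior power) of $A \in \M_{d^2}$, acting on $\Lambda^j(\cc^{d^2})$. Two standard facts are essential: (i) $\Lambda^j$ is multiplicative, $\Lambda^j(\hat{Q}_k^n) = \Lambda^j(\hat{Q}_k)^n$; and (ii) the operator norm of $\Lambda^j(A)$ equals the product of the top $j$ singular values of $A$, while the spectral radius of $\Lambda^j(A)$ equals the product of the top $j$ eigenvalue moduli of $A$. Applying the base case to $\Lambda^j(\hat{Q}_k)$ therefore yields, for each $j \in \{1,\ldots,d^2\}$,
\begin{equation}
\Bigl(\prod_{i=0}^{j-1} s_i(\hat{Q}_k^n)\Bigr)^{1/n}
\;=\; s_0\!\left(\Lambda^j(\hat{Q}_k)^n\right)^{1/n}
\;\xrightarrow[n\to\infty]{}\; \rho\!\left(\Lambda^j(\hat{Q}_k)\right)
\;=\; \prod_{i=0}^{j-1} |\lambda_i(\hat{Q}_k)|.
\end{equation}
Dividing the limits obtained for $j$ and $j-1$ isolates the single factor $s_{j-1}(\hat{Q}_k^n)^{1/n}$ and yields the desired relation $s_i(\hat{Q}_k^n)^{1/n} \to |\lambda_i(\hat{Q}_k)|$.

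The main obstacle is that this ``divide consecutive limits'' manoeuvre is valid only as long as the preceding product $\prod_{j<i}|\lambda_j(\hat{Q}_k)|$ is strictly positive; once an eigenvalue vanishes, the ratio becomes indeterminate and one has to argue separately that the tail singular values also decay subexponentially. I would handle this through the Schur triangularization $\hat{Q}_k = U(D+N)U^*$, with $D$ the diagonal of eigenvalues ordered by magnitude and $N$ strictly upper-triangular nilpotent: a direct combinatorial expansion of $(D+N)^n$ shows that every entry is dominated by a polynomial in $n$ times the magnitude of the largest eigenvalue ``visible'' through the path in $N$, which forces $s_i(\hat{Q}_k^n)^{1/n} \to 0$ whenever $|\lambda_i(\hat{Q}_k)| = 0$. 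Combined with the compound-matrix ratio argument for the non-vanishing eigenvalues, this completes the proof; the remaining work is bookkeeping to align multiplicities and the decreasing ordering of the $s_i$ with that of the $|\lambda_i|$.
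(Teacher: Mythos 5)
Your argument is correct, and it is essentially the standard proof of Yamamoto's theorem that the paper itself does not reproduce but delegates to the cited reference (Horn and Johnson, \emph{Topics in Matrix Analysis}, p.~180): Gelfand's spectral radius formula for the top singular value, lifted to products $\prod_{i<j} s_i$ via the multiplicativity of the $j$-th compound matrix, followed by division of consecutive limits. One small simplification: your Schur-triangularization detour for vanishing eigenvalues is not needed, since if $r$ is the first index with $|\lambda_r(\hat{Q}_k)|=0$ the ratio argument already gives $s_r(\hat{Q}_k^n)^{1/n}\to 0$ (numerator limit zero, denominator limit positive), and the monotone ordering $s_i(\hat{Q}_k^n)\leq s_r(\hat{Q}_k^n)$ for $i\geq r$ handles all remaining indices.
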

In the limit of $n \ra \infty$ applications of the quantum channel, we can start blocking the channel in $m$ subsequent applications $T^{(m)} \equiv T^m$ 
and bound the convergence rate as a function of the singular values of the corresponding $\hat{Q}^{(m)}_k$, which indeed converge to the 
eigenvalues of the original cp-map . Convergence following the eigenvalue is therefore only guaranteed in the limit of $n \ra \infty$, and this 
would indeed be the case, when e.g. the eigenstructure of the original cp-maps contains a Jordan block associated to the second largest eigenvalue. 
Note, that convergence in the above lemma goes typically as $1/n$, which is very slow. Hence for finite $n$, convergence is governed by the singular 
values of $\hat{Q}_k$ as opposed to the eigenvalues. The bound derived in (\ref{MixBounds}) is an absolute bound for finite $n$ and clearly leads 
to a strictly monotonic decay. Note that in the case that the second largest singular value is also equal to 1, this can then always be cured by 
blocking the cp-maps together. Finally, it is worth mentioning that the convergence can in fact be much more rapid if one starts in a state "closer" to the fixed point. In particular, if the initial state is such that $\rho-\sigma \propto Y_k$, $k \geq 2$, where $Y_k$ is the eigenvector corresponding to $\lambda_k$, then the convergence will be governed by the magnitude of $\lambda_k$. Furthermore, if instead of a single fixed point, we have a fixed subspace, or a collection of fixed subspaces (with or without rotating points), then the convergence to this fixed subspace will be governed by the largest eigenvalue whose magnitude is strictly smaller than one.\\

Thus far we have only considered the time-discrete case, it is however straightforward to give a similar bound for time-continuous
Markov processes, that are described by a one parameter semi-group. The following lemma bounds the trace-distance as a function of $t \in \R^+_0$:
The proof of the following lemma is very similar to the proof of the time discrete case, we will therefore omit it here.
\begin{lemma}
Let $\lio$ denote the generator of a time continuous Markov process, described by the master equation $\partial_t \rho = \lio(\rho)$,
with solution $\rho(t) \in \sym_d$ $\forall \; t \in [0,+\infty)$ . Furthermore let $\sigma \in \sym^+_d$ denote the fixed-point $\lio(\sigma) = 0$, then
\be 
\left \| \rho(t) - \sigma \right \|_{\tr}^2 \leq e^{l^k_1 t } \chi_k^2(\rho(0),\sigma).
 \ee
Here, $l^k_1 \leq 0$ refers to the second largest eigenvalue of
\be \Lambda_k =    [\Omega_\sigma^k]^{1/2} \circ \lio^* \circ [\Omega_\sigma^k]^{-1/2}  + 
[\Omega_\sigma^k]^{-1/2} \circ \lio \circ  [\Omega_\sigma^k]^{1/2}.
\ee
\end{lemma}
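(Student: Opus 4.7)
The plan is to mirror the discrete-time proof, but differentiate $\chi^2_k(t)$ in time rather than take a finite difference, and then appeal to Gr\"onwall's inequality. Set $e(t) = \rho(t) - \sigma$ and $\chi^2_k(t) = \chi^2_k(\rho(t),\sigma) = \langle e(t), \Omega_\sigma^k(e(t))\rangle$. By Lemma~\ref{trbound}, it suffices to establish the bound $\chi^2_k(t) \le e^{l_1^k t}\chi^2_k(0)$, so the entire work is to control the time derivative of $\chi^2_k(t)$.

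First I would differentiate. Since $\partial_t \rho(t) = \lio(\rho(t))$ and $\lio(\sigma)=0$, we have $\partial_t e(t) = \lio(e(t))$, so
\begin{equation}
\frac{d}{dt}\chi^2_k(t) = \langle \lio(e(t)), \Omega_\sigma^k(e(t))\rangle + \langle e(t), \Omega_\sigma^k(\lio(e(t)))\rangle = \langle e(t), \left(\lio^*\circ\Omega_\sigma^k + \Omega_\sigma^k\circ \lio\right)(e(t))\rangle.
\end{equation}
Introducing the variable $B(t) = [\Omega_\sigma^k]^{1/2}(e(t))$, which satisfies $\langle B(t), B(t)\rangle = \chi^2_k(t)$, the right-hand side becomes $\langle B(t), \Lambda_k(B(t))\rangle$, with $\Lambda_k$ as defined in the statement.

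Next I would argue that $\Lambda_k$ has $l_1^k \le 0$ as its second largest eigenvalue, and that the ``top'' eigendirection does not contribute. The monotonicity of $\chi^2_k$ applied to the semigroup $T_t = e^{t\lio}$ (via Theorem~\ref{Theo:chi2-mono}) already implies the non-positivity: differentiating the inequality $\chi^2_k(T_t(\rho),\sigma)\le \chi^2_k(\rho,\sigma)$ at $t=0$ shows $\langle B, \Lambda_k(B)\rangle \le 0$ for every $B$ of the form $[\Omega_\sigma^k]^{1/2}(A)$ with $A$ hermitian and traceless. The $0$-eigenvector corresponds to $[\Omega_\sigma^k]^{1/2}(\sigma)$ (the continuous-time analogue of $P_0^k$ in the proof of Theorem~\ref{MixBounds}); but $e(t)$ is traceless by trace preservation of $T_t$, so $B(t)$ is orthogonal to this direction for all $t$. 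Consequently
\begin{equation}
\frac{d}{dt}\chi^2_k(t) = \langle B(t), \Lambda_k(B(t))\rangle \le l_1^k\,\langle B(t), B(t)\rangle = l_1^k\,\chi^2_k(t).
\end{equation}

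Finally, Gr\"onwall's inequality yields $\chi^2_k(t) \le e^{l_1^k t}\chi^2_k(0)$, which combined with Lemma~\ref{trbound} gives the claim. The delicate step I expect to be the main obstacle is the orthogonality argument in the second paragraph: one must make sure that the zero eigenvalue of $\Lambda_k$ is non-degenerate under the ergodicity/primitivity assumption on $\lio$ (otherwise $l_1^k$ could itself be $0$), and that $B(t)$ really stays in the complement of this kernel for all $t \ge 0$. The cleanest way is to note that the kernel of $\Lambda_k$ on the relevant invariant subspace is precisely the image under $[\Omega_\sigma^k]^{1/2}$ of $\ker\lio$, and that trace preservation forces $e(t) \perp \mathbbm{1}$ in the modified inner product $\langle \cdot,\Omega_\sigma^k(\cdot)\rangle$ for every $t$.
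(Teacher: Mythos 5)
Your proof is correct and is exactly the continuous-time transcription of the discrete argument that the paper itself declares it omits (``very similar to the proof of the time discrete case''): differentiate $\chi^2_k(t)$, pass to $B(t)=[\Omega_\sigma^k]^{1/2}(e(t))$, use monotonicity of the metric to make the quadratic form non-positive, project out the zero mode $[\Omega_\sigma^k]^{1/2}(\sigma)$ via trace preservation, and finish with Gr\"onwall. One small but real discrepancy you should not paper over: the operator your differentiation actually produces is $[\Omega_\sigma^k]^{-1/2}\circ\lio^*\circ[\Omega_\sigma^k]^{1/2}+[\Omega_\sigma^k]^{1/2}\circ\lio\circ[\Omega_\sigma^k]^{-1/2}$ (the $t=0$ derivative of $Q_{k,t}^*\circ Q_{k,t}$, consistent with Lemma~\ref{specInterval}), whereas the $\Lambda_k$ printed in the statement has the $\pm 1/2$ exponents transposed; the two operators are not similar in general, so you should flag this as a typo in the lemma rather than silently identifying your (correct) operator with the printed one.
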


The symmetrization for the generator of  the time continuous Markov process is additive as would be expected. Furthermore, we note that the monotonicity of the 
$\chi^2$-divergence ensures that the spectrum of $\Lambda_k$ is never positive, based on a similar reasoning as given in lemma (\ref{specInterval}). 

\subsection{Contraction Coefficients}
In the following we study the contraction of the $\chi^2$-divergences under quantum channels, and its relation to the trace norm contraction. We consider general contraction rather than contraction to the fixed point because analytic results are more readily available, and because these bounds are in a sense the \textit{most} stringent one can require. We focus primarily on the mean $\alpha$-subfamily of $\chi^2$-divergences. \\ Let us define the following contraction coefficients which we call the $\chi^2$- and trace norm- contraction respectively:
\be \eta^\alpha_{\chi}(T) = \sup_{\rho,\sigma \in \sym_d} \frac{\chi_{\alpha}^2(T(\rho),T(\sigma))}{\chi_{\alpha}^2(\rho,\sigma)} \ee
and 
\be \eta_{\tr}(T)=\sup_{\rho,\sigma \in \sym_d} \frac{||T(\rho-\sigma)||_1}{||\rho-\sigma||_1} = \sup_{\phi,\psi \in \sym^1_d ,\langle \phi|\psi \rangle=0} \frac{1}{2}||T(\psi)-T(\phi)||_1, \ee

where $T:\M_d\rightarrow\M_{d}$ is a quantum channel, and the last equality is seen simply by convexity of the trace norm.\\
We first upper bound the trace-norm contraction in terms of the $\chi^2$ contraction, which is a generalization of a result in \cite{Ruskai1}:
\begin{theorem}\label{trnormub}
For all $\alpha \in (0,1]$, and a quantum channel $T:\M_d\rightarrow\M_{d}$, 
\be \eta_{tr}(T) \leq \sqrt{\eta^\alpha_\chi(T)} \ee.
\end{theorem}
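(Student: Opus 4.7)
The plan is to apply Lemma \ref{trbound} to bound the trace norm on the image side by $\chi^2_\alpha$, and on the input side to construct a one-parameter family of states for which Lemma \ref{trbound} is asymptotically tight. The naive attempt --- substituting orthogonal pure states directly into $\chi^2_\alpha$ --- fails, because the supports are disjoint and $\chi^2_\alpha(\psi,\phi) = \infty$, so some regularization is essential.

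By the second characterization of $\eta_{tr}(T)$ given in the definition, it suffices to bound $\frac{1}{2}\|T(\psi) - T(\phi)\|_1$ for orthogonal pure states $\phi, \psi$. Fixing such a pair and $\epsilon \in (0, 1/2)$, I would consider
\begin{equation}
\rho_\epsilon = (1-\epsilon)\psi + \epsilon \phi, \qquad \sigma_\epsilon = \epsilon \psi + (1-\epsilon)\phi,
\end{equation}
which share the rank-two support $\mathrm{span}\{\ket{\phi}, \ket{\psi}\}$ and satisfy $\rho_\epsilon - \sigma_\epsilon = (1-2\epsilon)(\psi - \phi)$. In the basis $\{\ket{\phi}, \ket{\psi}\}$ the operators $\sigma_\epsilon, \sigma_\epsilon^{-\alpha}, \sigma_\epsilon^{\alpha-1}$, and $\rho_\epsilon - \sigma_\epsilon$ are all simultaneously diagonal, and the $\alpha$-dependent exponents combine via $\sigma_\epsilon^{-\alpha}\sigma_\epsilon^{\alpha-1} = \sigma_\epsilon^{-1}$. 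A short elementary computation then yields the $\alpha$-independent identities
\begin{equation}
\chi^2_\alpha(\rho_\epsilon, \sigma_\epsilon) = \frac{(1-2\epsilon)^2}{\epsilon(1-\epsilon)}, \qquad \|\rho_\epsilon - \sigma_\epsilon\|_1^2 = 4(1-2\epsilon)^2,
\end{equation}
so $\chi^2_\alpha(\rho_\epsilon, \sigma_\epsilon)/\|\rho_\epsilon - \sigma_\epsilon\|_1^2 \to 1$ as $\epsilon \to 1/2$; that is, Lemma \ref{trbound} becomes tight in this limit.

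Chaining Lemma \ref{trbound} applied to $(T(\rho_\epsilon), T(\sigma_\epsilon))$ with the defining inequality of $\eta^\alpha_\chi(T)$ then gives
\begin{equation}
(1-2\epsilon)^2 \|T(\psi - \phi)\|_1^2 = \|T(\rho_\epsilon - \sigma_\epsilon)\|_1^2 \leq \chi^2_\alpha\bigl(T(\rho_\epsilon), T(\sigma_\epsilon)\bigr) \leq \eta^\alpha_\chi(T) \cdot \frac{(1-2\epsilon)^2}{\epsilon(1-\epsilon)}.
\end{equation}
Dividing by $(1-2\epsilon)^2$ and letting $\epsilon \to 1/2$ yields $\frac{1}{2}\|T(\psi) - T(\phi)\|_1 \leq \sqrt{\eta^\alpha_\chi(T)}$, and taking the supremum over orthogonal pure $\phi, \psi$ establishes the theorem.

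The main obstacle is finding the right regularization of the singular pair $(\psi, \phi)$. The crucial feature of the family $(\rho_\epsilon, \sigma_\epsilon)$ is that $\rho_\epsilon - \sigma_\epsilon$ stays proportional to $\psi - \phi$, so the factor $(1-2\epsilon)^2$ that appears in both $\chi^2_\alpha(\rho_\epsilon, \sigma_\epsilon)$ and $\|T(\rho_\epsilon - \sigma_\epsilon)\|_1^2$ cancels, while $\sigma_\epsilon$ picks up just enough non-singularity to keep $\chi^2_\alpha$ finite and, in the limit, equal to $\|\rho_\epsilon - \sigma_\epsilon\|_1^2$. This is precisely the mechanism that transfers a $\chi^2$-contraction bound into a trace-norm contraction bound.
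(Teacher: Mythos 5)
Your proof is correct, and it reaches the theorem by a route that differs from the paper's in how the input-side $\chi^2$ is controlled. The paper works directly with an arbitrary traceless Hermitian $N=N_+-N_-$ and takes the reference state $P=|N|/\|N\|_1$; since $N$ and $P$ commute, this gives the exact identity $\tr[NP^{-\alpha}NP^{\alpha-1}]=\|N\|_1^2$, after which it chains Lemma~\ref{trbound} on the output side with the definition of $\eta^\alpha_\chi$ just as you do, identifying $\rho-\sigma=N$ and $\sigma=P$. Your construction is essentially the rank-two instance of this: as $\epsilon\to 1/2$ your $\sigma_\epsilon$ converges to $(\psi+\phi)/2=|N|/\|N\|_1$ for $N\propto\psi-\phi$, so your limit recovers the paper's optimal reference state at the cost of a limiting argument the paper avoids. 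What your version buys is that $\rho_\epsilon,\sigma_\epsilon$ are manifestly density operators, whereas the paper's $\rho=N+P$ is a state only after rescaling $N$ so that $\|N\|_1\leq 1$ (left implicit there, and harmless since both ratios are homogeneous in $N$); you also lean on the orthogonal-pure-state characterization of $\eta_{tr}$, which the paper's argument does not need. What the paper's version buys is that no limit is required and the identity $\tr[N\,\Omega^k_P(N)]=\|N\|_1^2$ holds verbatim for every $k\in\K$, since $k(\Delta_{P,P})$ acts as the identity on operators commuting with $P$; this is why the remark following the theorem can extend the bound to all monotone metrics, and your computation $\sigma_\epsilon^{-\alpha}\sigma_\epsilon^{\alpha-1}=\sigma_\epsilon^{-1}$ is exactly the $\alpha$-family case of that observation.
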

\proof{ From Lemma \ref{trbound}, we have that $||T(\rho-\sigma)||^2_1 \leq \chi^2_\alpha(T(\rho),T(\sigma))$, for all $\rho, \sigma \in \sym_d$. Let $N$ be traceless and hermitian, and note that it can be written as $N=N_{+}-N_{-}$, where $N_{+}, N_{-}$ are positive definite and orthogonal in their support. Now let $P=|N|/||N||_1$ and recall that $|N|=N_{+}+N_{-}$, then we get $\tr{[NP^{-\alpha}NP^{\alpha-1}]} = ||N||_1^2$, for every $\alpha \in (0,1]$. Also,
\be\frac{||T(N)||^2_1}{||N||^2_1} \leq \frac{\tr{[T(N)T(P)^{-\alpha}T(N)T(P)^{\alpha-1}]}}{\tr{[NP^{-\alpha}NP^{\alpha-1}]}} \ee
where the inequality is in the numerator, and the denominators are equal, by the previous observation. Taking the supremum over all traceless hermitian $N$ on the left hand side and identifying $\rho-\sigma=N/||N||_1$, $P=\sigma$ then gives the desired result. \qed}

We now provide a lower bound to the trace norm contraction for primitive channels:
\begin{theorem} \label{contrlb}
Given a quantum channel $T:\M_d\rightarrow\M_{d}$,
\be \eta^{\alpha=1/2}_\chi(T) \leq \eta_{tr}(T)\ee
\end{theorem}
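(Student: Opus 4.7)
The plan is to linearize around a reference state, rewrite the resulting ratio as an operator norm of a sandwiched map, and then apply the trace-norm contraction pointwise to a pure-state decomposition of the perturbation.

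Since the ratio $\chi^2_{1/2}(T(\rho),T(\sigma))/\chi^2_{1/2}(\rho,\sigma)$ is homogeneous of degree zero in $\rho-\sigma$, I first substitute $\rho=\sigma+\epsilon X$ for small $\epsilon>0$ and traceless hermitian $X$. The ratio is independent of $\epsilon$, reducing the problem to
\be
\eta^{\alpha=1/2}_\chi(T)=\sup_{\sigma,X}\frac{\|T(\sigma)^{-1/4}T(X)T(\sigma)^{-1/4}\|_2^2}{\|\sigma^{-1/4}X\sigma^{-1/4}\|_2^2},
\ee
where the sup is over $\sigma\in\sym_d^+$ and hermitian traceless $X$. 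Rewriting the two $\chi^2_{1/2}$-divergences as squared Hilbert-Schmidt norms uses the tensor-product form $\Omega^{1/2}_\sigma=L_\sigma^{-1/2}R_\sigma^{-1/2}$ highlighted in the remark following Lemma~\ref{specInterval}; this is precisely the reason the theorem is restricted to $\alpha=1/2$.

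Next, I spectrally decompose $Y=\sigma^{-1/4}X\sigma^{-1/4}=\sum_i\lambda_i|i\rangle\langle i|$ and introduce the (generally non-orthogonal) pure states $\tilde P_i=|\tilde i\rangle\langle\tilde i|$ with $|\tilde i\rangle=\sigma^{1/4}|i\rangle/\sqrt{p_i}$ and $p_i=\langle i|\sigma^{1/2}|i\rangle>0$. Then $X=\sum_i\lambda_ip_i\tilde P_i$ with $\sum_i\lambda_ip_i=\tr[X]=0$, so grouping the positive and negative $\lambda_i$ yields a Jordan-like decomposition $X=c(\rho_+-\rho_-)$ with density matrices $\rho_\pm$ and $c=\sum_{\lambda_i>0}\lambda_ip_i$. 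Correspondingly $\sigma=\sum_i p_i\tilde P_i$, which produces the operator inequalities $T(\sigma)\geq p_i T(\tilde P_i)$ that I use to control the sandwiched pure-state images $T(\sigma)^{-1/4}T(\tilde P_i)T(\sigma)^{-1/4}$.

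Finally, I apply the trace-norm contraction $\|T(\rho_+)-T(\rho_-)\|_1\leq 2\eta_{tr}(T)$, available from the pure-state characterization of $\eta_{tr}$ recalled in the excerpt before Theorem~\ref{trnormub}, and combine it with the operator inequalities of the previous step via a single Cauchy-Schwarz estimate to obtain $\|T(\sigma)^{-1/4}T(X)T(\sigma)^{-1/4}\|_2^2\leq\eta_{tr}(T)\|\sigma^{-1/4}X\sigma^{-1/4}\|_2^2$. The main obstacle lies in this last step: one must convert the trace-norm bound on $T(\rho_+)-T(\rho_-)$ into a bound on the Hilbert-Schmidt norm of the $T(\sigma)^{-1/4}$-weighted operator without incurring a spurious dimensional factor. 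This is possible exactly because of the commuting product structure of the $\alpha=1/2$ inversion, mirroring the observation after Theorem~\ref{Theo:chi2-mono} that $S_{1/2}$ is similar to a completely positive map while for generic $\alpha$ it is not.
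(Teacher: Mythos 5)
Your reduction of $\eta^{\alpha=1/2}_\chi(T)$ to $\sup_{\sigma,X}\|T(\sigma)^{-1/4}T(X)T(\sigma)^{-1/4}\|_2^2\,/\,\|\sigma^{-1/4}X\sigma^{-1/4}\|_2^2$ over $\sigma>0$ and traceless hermitian $X$ is fine, but the argument breaks down afterwards in two places. First, a computational error: with $|\tilde i\rangle=\sigma^{1/4}|i\rangle/\sqrt{p_i}$ one gets $\sum_i p_i\tilde P_i=\sigma^{1/4}\big(\sum_i|i\rangle\langle i|\big)\sigma^{1/4}=\sigma^{1/2}$, not $\sigma$, so the operator inequalities $T(\sigma)\geq p_iT(\tilde P_i)$ that you want to lean on do not follow from your decomposition. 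Second, and more seriously, the step you yourself flag as ``the main obstacle'' is exactly where the theorem lives, and it is not closed. The bound to be proved is \emph{linear} in $\eta_{\tr}$ even though both numerator and denominator are quadratic in the perturbation; any scheme that applies the trace-norm contraction to $T(\rho_+)-T(\rho_-)$ and then squares via Cauchy--Schwarz naturally produces $\eta_{\tr}^2$ together with cross terms. A clean bound $\eta^{\alpha=1/2}_\chi\leq\eta_{\tr}^2$ would be strictly stronger than the theorem and, combined with Theorem~\ref{trnormub}, would force the identity $\eta^{\alpha=1/2}_\chi=\eta_{\tr}^2$, for which there is no evidence; and the realistic outcome of the squaring is precisely the spurious constants you say you will avoid. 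Asserting that the commuting product structure of $\Omega^{1/2}_\sigma$ makes this work is not a proof.

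The paper's mechanism is different and is the missing idea. One writes $\eta^{\alpha=1/2}_\chi(T)=\sup_P\lambda_1(\Gamma_P)$ with $\Gamma_P=\Omega_P^{-1}\circ T^*\circ\Omega_{T(P)}\circ T$, where $\lambda_1$ is the second largest eigenvalue, characterized by a Rayleigh quotient over traceless hermitian $N$. The reason the theorem is restricted to $\alpha=1/2$ is not the tensor-product form per se but that $\Omega_P^{-1}(A)=\sqrt{P}A\sqrt{P}$ is \emph{completely positive}, so that $\Gamma_P$ and $\Omega_P^{-1}\circ T^*\circ\Omega_{T(P)}$ are trace-preserving CP maps, i.e.\ channels. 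Evaluating the eigenvalue equation on the eigenvector $N_1$ gives $\lambda_1N_1=\big(\Omega_P^{-1}\circ T^*\circ\Omega_{T(P)}\big)\big(T(N_1)\big)$, hence $\lambda_1\|N_1\|_1\leq\|T(N_1)\|_1\leq\eta_{\tr}\|N_1\|_1$: one factor of contraction comes for free from the channel property and only a single factor of $\eta_{\tr}$ is spent, which is how the linear bound is obtained with no dimensional constant. Neither the complete positivity of $\Omega_P^{-1}$ nor the evaluation at the eigenvector appears in your sketch, and without them I do not see how to complete it.
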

First we introduce an eigenvalue type min-max characterization of the $\chi^2$-contraction, and then show that this eigenvalue must be smaller than the trace norm-contraction.\\
Let $P>0$, and consider the following eigenvalue equation:

\be \hat{\Gamma}\ket{A} \equiv \hat{\Omega}^{-1}_P\hat{T}^\dag \hat{\Omega}_{T(P)} \hat{T} \ket{A} = \lambda \ket{A}  \label{evalw}, \ee 

where $\Omega_X \equiv \Omega^{\alpha=1/2}_X$. It $T$ has a non-trivial kernel, then $\Omega_{T(P)}$ should be understood in terms of the pseudo-inverse. First note that $\Gamma$ is a quantum channel, so its spectrum is bounded by one, and that it reaches one for $A=P$. Also note that $\Gamma$ is similar to a hermitian operator, so it has all real eigenvalues, so we can take the eigenvectors to be hermitian. Then rewriting (\ref{evalw}) as $\hat{T}^\dag \hat{\Omega}_{T(P)} \hat{T} \ket{A} = \lambda \hat{\Omega}_P \ket{A}$, we can express the second largest eigenvalue as:

\bea \lambda_1(T,P) &=& \sup_{ \bra{N} \Omega_P(P) \rangle=0,N=N^\dag} \frac{\bra{N}\hat{T}^\dag \hat{\Omega}_{T(P)} \hat{T} \ket{N}}{\bra{N} \hat{\Omega}_P \ket{N}} \notag \\
&=& \sup_{\tr{N}=0,N=N^\dag} \frac{tr[ T(N)T(P)^{-1/2}T(N)T(P)^{-1/2}]}{tr[NP^{-1/2}NP^{-1/2}]}. \eea

Clearly, by maximizing over all $P$, one recovers $\eta^{1/2}_\chi(T)$. We now prove the above theorem:\\
\proof{
Let $N_1$ be the eigenvector for which $\lambda_1$ satisfies the eigenvalue equation (\ref{evalw}), and recall that $N_1$ is Hermitian and traceless. 
Then,
\be \lambda_1 ||N_1||_1 = ||\Gamma(T(N_1))||_1 \leq ||T(N_1)||_1 \ee
because $\Gamma$ is a channel, and 
\be \lambda_1 \leq \frac{||T(N_1)||_1}{||N_1||_1} \leq \sup_{\tr{N}=0, N^{\dag}=N} \frac{||T(N)||_1}{||N||_1}=\eta_{tr}, \ee
taking the supremum over positive $P$ completes the proof. 
\qed}

\vspace{0.5cm}
{\bf Remark:}
Theorem~\ref{contrlb} gives a computable lower bound to the trace norm contraction. 
A key subtlety in the argument is that $\left[\Omega_P(A)\right]^{-1} = {\sqrt{P}} A {\sqrt{P}}$ 
is a completely positive (CP) map (with a single Kraus operator $\sqrt{P}$) which implies that 
$\Gamma$ is a quantum channel.  In general, $\Omega_P$ is not even positivity preserving.   
Another exception is the monotone metric associated with the usual logarithmic relative 
entropy for which $k(w) = \frac{ \log w}{w-1}$.
It is well-known \cite{AN, Petz0, LR} that  $  \Omega^{\log}_P(A)$ can be written as
\bea   \label{omegalog}
   \Omega^{\log}_P(A) = \int_0^\infty \frac{1}{P + xI} A   \frac{1}{P + xI}  dx
\eea
which is clearly CP. An analogous lower bound was shown in \cite{LR} for this map using a
similar argument.    Clearly, this can be extended to any monotone metric for which
$ \Omega_P$ is CP; however, we do not know of any other examples.

Very little is known about the ordering of the general $\eta_k$ contraction coefficients. 
In particular, We do not know whether  whether 
$\eta^{\log}_\chi$ is smaller or larger than $\eta^{\alpha=1/2}_\chi$. 
However, it is  known \cite{LR} that $\eta_k$ are not all identical for different $k \in \mathcal{K}$.;
because examples can be constructed using non-unital qubit channels.
Theorem~\ref{trnormub} can readily be extended to any metric associated
with  $k \in {\cal K}$.    However, it seems unlikely that Theorem~\ref{contrlb} holds in general,.
Thus, we can conclude
\bea
   \max \{  \eta_\chi^{\alpha=1/2}(T), \eta_\chi^{\log}(T)  \}  \leq   \eta_\tr(T)  \leq  \inf_{k \in {\cal K}}  \sqrt{\eta_\chi^k(T)} ~.
\eea
 \\

Note that if instead of maximizing over all $P$ we only consider contraction of the map to the steady state, and denote it $\bar{\eta}(T)=\eta(T)_{P=\sigma}$, then from the above arguments one immediately gets:
\be \bar{\eta}^\alpha_\chi(T) \leq \bar{\eta}_\tr(T) \leq \eta_\tr(T) \leq 1 \ee \label{etabar}
Combing this with the previous bounds above, we have
\be 
\lambda_1 \leq s_1^{\alpha=1/2} = \bar{\eta}_\chi^{\alpha=1/2} \leq \eta_\chi^{\alpha=1/2} \leq \eta_\tr \leq \sqrt{\eta_\chi^{\alpha=1/2}}. \ee
Moreover, $k(w) = \sqrt{w}$ on the right can be replaced by any $k \in {\cal K}$,
and that on the left  by $k(w)=(w-1)^{-1}\log{w}$.   It is very tempting to conjecture that $\bar{\eta}^2_\tr \leq \bar{\eta}^\alpha_\chi$, 
and/or that $\eta_\tr \leq \sqrt{\bar{\eta}_\chi^{\alpha=1/2}}$, but simple numerical counterexamples show these to be false.

% \pagebreak

\section{Quantum Detailed Balance}
\label{detailedbalance}

The detailed balance condition is often crutial in the analysis of classical Markov chain mixing times, as it ensures several convenient properties of the Markov chain. 
In particular, it implies that the classical probability distribution with respect to which the stochastic map 
is detailed balanced is a fixed point of the chain. Furthermore, detailed balanced stochastic 
maps have a real spectrum.  In this section we generalize the notion of classical detailed balance 
to quantum Markov chains. Alternative definitions of quantum detailed balance have been given in the literature: \cite{refdetBal, Majewski,Streater,Alicki} and references therein.
Central to our approach is the operator $Q_k$ as previously introduced 
in Lemma \ref{MixBounds}. In the literature for classical Markov chains an analogous matrix 
exists and is often referred to as the discriminant. 
\begin{definition}
For a channel $T:\M_d\rightarrow\M_{d}$ and a state $\sigma \in \sym^+_d$ with corresponding inversion 
$\Omega_{\sigma}^k$ as defined in (\ref{genInv}), we define the quantum discriminant of $T$ as,
\be \label{Discriminant}
 	Q_k =  [\Omega_\sigma^k]^{1/2}  \circ T \circ [\Omega_\sigma^k]^{-1/2}.
\ee
\end{definition}
We recall that the convergence of an arbitrary quantum Markov process can be bounded by the singular values of  $\hat{Q}_k$. 
Classical detailed balanced Markov chains have the property that the corresponding discriminant becomes symmetric. 
We shall therefore define the quantum generalization by requiring that for a quantum detailed balanced process 
\be
Q_k^* = Q_k.
\ee
This immediately allows to make a statement about the spectrum of quantum detailed balanced maps. 
Due to the hermicity of the matrix representation of the map (\ref{Discriminant}) we can immediately deduce, just as for classically case,
that the quantum channel $T$ has a real spectrum. For detailed balanced maps, the second largest eigenvalue in magnitude coincides with the second largest singular value. 
Furthermore, we would like to point out that this is actually not just a single condition for quantum detailed balance but a whole family. 
Hence every different inversion $\Omega^k_{\sigma}$ gives rise to a different condition for detailed balance. We therefore 
define as the quantum generalization of detailed balance:
\begin{definition}\label{def:DetBalance}
For a channel $T:\M_d\rightarrow\M_{d}$ and a state $\sigma \in \sym^+_d$, we say that $T$ obeys {\emph k- detailed balance} with respect to $\sigma$ with  $k \in \K$ , when
\be\label{eqDetBal}
	 [\Omega_{\sigma}^k]^{-1} \circ T^{*}  = T \circ [\Omega_{\sigma}^k]^{-1}.
\ee
\end{definition}
A consequence of this definition is that $\sigma$ is a fixed point of $T$.
\begin{lemma}\label{lem:steady}
Let $\sigma \in \sym_d$ be a state and $T$ a channel that
satisfies the detailed balance definition  \ref{def:DetBalance} with respect to $\Omega_{\sigma}^k$, 
then $\sigma$ is a steady state of $T$.
\end{lemma}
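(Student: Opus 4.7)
The plan is to exploit two facts: (i) a normalization property hidden in the definition of $\K$, and (ii) the fact that trace preservation of $T$ corresponds to unitality of $T^*$. Together these let one apply the detailed balance identity to $\1$ and read off that $\sigma$ is fixed.

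First I would show the key identity $\Omega_\sigma^k(\sigma) = \1$, or equivalently $[\Omega_\sigma^k]^{-1}(\1) = \sigma$. This holds because $\Delta_{\sigma,\sigma}(\sigma) = \sigma\,\sigma\,\sigma^{-1} = \sigma$, so $\sigma$ lies in the eigenspace of $\Delta_{\sigma,\sigma}$ with eigenvalue $1$. Applying the functional calculus and using the normalization $k(1) = 1$ built into the definition of $\K$ gives $k(\Delta_{\sigma,\sigma})(\sigma) = k(1)\,\sigma = \sigma$. Then $R_\sigma^{-1}(\sigma) = \sigma\sigma^{-1} = \1$, so indeed $\Omega_\sigma^k(\sigma) = R_\sigma^{-1}\,k(\Delta_{\sigma,\sigma})(\sigma) = \1$. (In the diagonal basis $\sigma = \sum_i \mu_i |i\rangle\langle i|$ one sees the same thing at the level of matrix units: $\Omega_\sigma^k(|i\rangle\langle j|) = \mu_j^{-1} k(\mu_i/\mu_j)|i\rangle\langle j|$, which collapses to $\1$ when summed against $\mu_i$ on the diagonal.)

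Next I would invoke trace preservation. Since $T$ is cpt, its Hilbert–Schmidt adjoint satisfies $T^*(\1) = \1$. Now apply the detailed balance equation \eqref{eqDetBal} to the argument $\1$: the left-hand side becomes $[\Omega_\sigma^k]^{-1}\bigl(T^*(\1)\bigr) = [\Omega_\sigma^k]^{-1}(\1) = \sigma$ by step one, while the right-hand side is $T\bigl([\Omega_\sigma^k]^{-1}(\1)\bigr) = T(\sigma)$. Comparing gives $T(\sigma) = \sigma$.

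The proof is essentially a one-line computation once the right input is chosen, so there is no real obstacle; the only point requiring a moment of care is justifying $k(\Delta_{\sigma,\sigma})(\sigma) = \sigma$ from $\Delta_{\sigma,\sigma}(\sigma) = \sigma$ and $k(1) = 1$, which is immediate from the functional calculus since $\Delta_{\sigma,\sigma}$ is diagonalizable with strictly positive spectrum.
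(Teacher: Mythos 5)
Your proof is correct and follows essentially the same route as the paper: establish $[\Omega_\sigma^k]^{-1}(\1)=\sigma$ (equivalently $\Omega_\sigma^k(\sigma)=\1$) from the normalization $k(1)=1$ and the fact that $\1$ (resp.\ $\sigma$) is an eigenvector of $\Delta_{\sigma,\sigma}$ with eigenvalue $1$, then apply the detailed balance identity to $\1$ together with $T^*(\1)=\1$. The only cosmetic difference is that the paper writes $[\Omega_\sigma^k]^{-1}=R_\sigma f(\Delta_{\sigma,\sigma})$ with $f=1/k$ and evaluates it on $\1$ directly, whereas you evaluate $\Omega_\sigma^k$ on $\sigma$; these are the same computation.
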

\begin{proof}
Recall that the inverse is given by $[\Omega_{\sigma}^k]^{-1} = R_{\sigma} f(\Delta_{\sigma,\sigma})$, where $f(w) = 1/k(w)$. 
Hence, since $k(1) = f(1) = 1$, we have
\bea
[\Omega_{\sigma}^k]^{-1}(\1) = R_{\sigma} f(\Delta_{\sigma,\sigma}) \1 = R_{\sigma} \1 =  \sigma.
\eea
Now, since furthermore $T^*(\1) = \1$, we have that 
\bea
	T(\sigma) =  T \circ \left[ \Omega_{\sigma}^k \right]^{-1}(\1) = 
	\left[ \Omega_{\sigma}^k \right]^{-1} \circ T^{*}(\1) = [\Omega_{\sigma}^k]^{-1}(\1) = \sigma.
\eea
\qed
\end{proof}
Given a probability distribution on some set of states, it is desirable to have a simple criterium to check whether
a completely positive map obeys detailed balance with respect to the state generated from the distribution. This
criterium may then serve to set up a Markov chain that actually converges to the desired steady state.
\begin{proposition}\label{DetBalcond}
Let $\{\ket{i}\}_i$ be a  complete orthonormal basis of $\HS$ and let  $\{\mu_i\}_i$ be a probability distribution on this basis.
Furthermore, assume that a quantum channel $T$ obeys
\be
\frac{\mu_n}{ k\left(\mu_m / \mu_n\right)} \;\;  \bra{i}\; T(\;\ket{n}\bra{m}\;) \; \ket{j}
 = \frac{\mu_i}{ k\left(\mu_j / \mu_i \right)}  \;\;   \bra{m}\; T(\;\ket{j}\bra{i}\;) \; \ket{n},
\ee
then $\sigma = \sum_i \mu_i \ket{i}\bra{i}$ and $T$ obey the detailed balance condition with respect to $\Omega_{\sigma}^k$.
\end{proposition}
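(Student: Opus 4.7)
The plan is to check the operator identity \eqref{eqDetBal} on each matrix unit $\ket{n}\bra{m}$ of the distinguished basis, since these span $\M_d$ and both sides are linear.

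First I would compute the action of $[\Omega_\sigma^k]^{-1}$ on matrix units. Because $\sigma=\sum_i\mu_i\ket{i}\bra{i}$ is diagonal, the modular operator acts as $\Delta_{\sigma,\sigma}(\ket{n}\bra{m})=(\mu_n/\mu_m)\ket{n}\bra{m}$ and $R_\sigma(\ket{n}\bra{m})=\mu_m\ket{n}\bra{m}$. Using $[\Omega_\sigma^k]^{-1}=R_\sigma f(\Delta_{\sigma,\sigma})$ with $f=1/k$, one obtains
\be
[\Omega_\sigma^k]^{-1}(\ket{n}\bra{m})=\frac{\mu_m}{k(\mu_n/\mu_m)}\ket{n}\bra{m}.
\ee
Invoking the defining identity $k(w^{-1})=wk(w)$ for $k\in\K$, this can equivalently be written as $\frac{\mu_n}{k(\mu_m/\mu_n)}\ket{n}\bra{m}$, matching the ratios that appear in the hypothesis of the proposition.

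Next I would expand $T^*(\ket{n}\bra{m})$ in matrix units by using the defining property of the Hilbert–Schmidt adjoint, which gives
\be
\bra{i}T^*(\ket{n}\bra{m})\ket{j}=\tr\bigl[T(\ket{j}\bra{i})\ket{n}\bra{m}\bigr]=\bra{m}T(\ket{j}\bra{i})\ket{n}.
\ee
Applying $[\Omega_\sigma^k]^{-1}$ afterwards is easy because it is diagonal in the matrix-unit basis, so the $(i,j)$-entry of $[\Omega_\sigma^k]^{-1}\circ T^*(\ket{n}\bra{m})$ picks up the factor $\mu_i/k(\mu_j/\mu_i)$. On the other side, $T\circ[\Omega_\sigma^k]^{-1}(\ket{n}\bra{m})$ is just the scalar $\mu_n/k(\mu_m/\mu_n)$ times $T(\ket{n}\bra{m})$, whose $(i,j)$-entry is $\bra{i}T(\ket{n}\bra{m})\ket{j}$.

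Equating the two $(i,j)$-entries gives
\be
\frac{\mu_i}{k(\mu_j/\mu_i)}\bra{m}T(\ket{j}\bra{i})\ket{n}=\frac{\mu_n}{k(\mu_m/\mu_n)}\bra{i}T(\ket{n}\bra{m})\ket{j},
\ee
which is precisely the assumed condition. Since matrix units form a basis of $\M_d$, equality of operators follows. The only subtlety (and thus the step I would be most careful about) is the bookkeeping of the $k$-factors: without the functional equation $k(w^{-1})=wk(w)$ the two expressions $\mu_m/k(\mu_n/\mu_m)$ and $\mu_n/k(\mu_m/\mu_n)$ would look asymmetric, and the hypothesis as stated would not obviously line up with the matrix elements of $[\Omega_\sigma^k]^{-1}\circ T^*$. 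Once that symmetry is noted, the proof reduces to bookkeeping.
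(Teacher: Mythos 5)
Your proof is correct and takes essentially the same route as the paper: both verify the operator identity $[\Omega_{\sigma}^k]^{-1}\circ T^* = T\circ[\Omega_{\sigma}^k]^{-1}$ entrywise in the matrix-unit basis, using the diagonal action of $[\Omega_\sigma^k]^{-1}$ on $\ket{n}\bra{m}$ and the Hilbert--Schmidt adjoint to convert matrix elements of $T^*$ into matrix elements of $T$. The only cosmetic difference is your explicit use of $k(w^{-1})=wk(w)$ to symmetrize the scalar prefactor, which the paper sidesteps by its choice of index ordering.
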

\begin{proof}
Note that $\{\ket{i}\bra{j}\}_{ij}$ forms a complete and orthonormal basis in the space $\M_d$ with respect to the Hilbert-Schmidt scalar 
product. We can therefore express equation (\ref{eqDetBal}) in this basis. The individual entries are equal due to
\bea
\tr\left[ \left(\ket{m}\bra{n}\right)^{\dagger}  [\Omega^k_\sigma]^{-1} \circ T^*(\ket{j}\bra{i})\right] = 
\mu_n \; k^{-1}\left(\mu_m / \mu_n\right) \tr\left[ T(\; \ket{m}\bra{n}\; )^{\dagger} \left(\ket{j}\bra{i}\right)\right] = \\\nonumber
\mu_n\; k^{-1}\left(\mu_m / \mu_n\right) \;\;  \bra{i}\; T(\;\ket{n}\bra{m}\;) \; \ket{j} = 
\mu_i\; k^{-1}\left(\mu_j / \mu_i \right)  \;\;   \bra{m}\; T(\;\ket{j}\bra{i}\;) \; \ket{n}  = \\\nonumber
\mu_i\; k^{-1}\left(\mu_j / \mu_i \right) \tr\left[  \left(\ket{m}\bra{n}\right)^{\dagger}  T(\ket{j}\bra{i})\right] = 
\tr\left[ \left(\ket{m}\bra{n}\right)^{\dagger} T \circ [ \Omega^k_\sigma]^{-1}(\ket{j}\bra{i})\right].
\eea
\qed
\end{proof}
{ \bf Remark:}
We note that the different quantum detailed balance conditions coincide for classical channels,  i.e. for stochastic processes that 
are included in the framework of quantum channels. Define the following "classical" Kraus operators:
\bea
	A^{cl}_{ij} = \sqrt{P_{ij}} \ket{i}\bra{j}  \; \; \mbox{and a state,} \; \;  \sigma = \sum_i \mu_i \ket{i}\bra{i}.
\eea
In this case, the condition of Proposition \ref{DetBalcond} reduces to the classical condition. This can be seen when considering 
the channel $T^{cl}(\rho) = \sum_{ij} A^{cl}_{ij} \rho A^{cl\dagger}_{ij}$ and checking for detailed balance with respect to sigma, since 
\be
\frac{\mu_m}{k\left(\mu_n / \mu_m\right)} \;\;  \bra{i}\; T^{cl}(\;\ket{n}\bra{m}\;) \; \ket{j} = \frac{\mu_m}{ k\left(\mu_n / \mu_m\right)} \delta_{nm} \delta_{ij} P_{in} \\\nonumber
\ee
and
\be
\frac{\mu_i}{ k\left(\mu_j / \mu_i\right)} \;\;  \bra{i}\; T^{cl}(\;\ket{n}\bra{m}\;) \; \ket{j} = \frac{\mu_i}{ k\left(\mu_j / \mu_i\right)} \delta_{nm} \delta_{ij} P_{ni}.
\ee
However since $k(1) = 1$ we are just left with the classical detailed balance condition $\mu_i P_{ni} = \mu_n P_{in}$  for all pairs $i,n$. \\\\
A natural question to ask is therefore, whether the different detailed balance condition are all identical. To see 
that this is not the case, consider the example given by the Kraus operators of a single qubit, i.e. $\HS = \cc^2$,
\bea
	A_1 = \frac{1}{\sqrt{2}}\left( \begin{array}{cc} 1 & 1 \\ 0  & 0 \end{array} \right) \;\;\; \mbox{and} \;\;\; 
	A_2 = \frac{1}{2}\left( \begin{array}{cc} 1 &  -1 \\  1  & -1 \end{array} \right).
\eea
This channel has the unique fixed point 
\bea
	\sigma =  \frac{1}{6} \left( \begin{array}{cc}  5 &  1 \\  1  & 1 \end{array} \right).
\eea
From this channel it is now possible to construct a channel that obeys detailed balance with respect to the inversion given 
by choosing $k(w) = w^{-1/2}$, that is the inversion reads $\Omega_\sigma^{\alpha=1/2} = L_\sigma^{-1/2}R_\sigma^{-1/2}$.  We consider 
therefore the symmetrized map, 
\bea
	T_s = \left[\Omega_\sigma^{\alpha=1/2} \right]^{-1} \circ  T^*  \circ  \Omega_\sigma^{\alpha=1/2} \circ T.  
\eea
For the specific instance where $\Omega_\sigma^{\alpha=1/2}$ is given as above, we are assured that the map $T_s$ is again 
a quantum channel, because one immediately finds the Kraus representation for 
$T_s(\rho) = \sum_{ij} B_{ij} \rho  B_{ij}^\dagger$  as $B_{ij}  = \sqrt{\sigma} A_i^\dagger [ \sqrt{\sigma}]^{-1} A_j$.
The individual Kraus operators read,
\bea
	&B_{11} = \frac{3}{5}\left( \begin{array}{cc} 1 & 1 \\ 1/2  & 1/2 \end{array} \right) \;\;\; &\mbox{and} \;\;\; 
	B_{12} = \frac{\sqrt{2}}{5}\left( \begin{array}{cc} 1 &  -1 \\  1/2  & -1/2 \end{array} \right), \\\nonumber\\\nonumber
	&B_{21} = \frac{\sqrt{2}}{20}\left( \begin{array}{cc} 3 & 3 \\ -1  & -1 \end{array} \right) \;\;\; &\mbox{and} \;\;\; 
	B_{22} = \frac{1}{5}\left( \begin{array}{cc} 3 &  -3 \\  -1  & 1 \end{array} \right).
\eea
The channel $T_s$ satisfies detailed balance with respect to $\Omega_\sigma^{\alpha=1/2}$ by construction.  This channel
however does not satisfy detailed balance with respect to the inversion 
$\Omega^{Bures}_\sigma =  2 \left[L_\sigma + R_\sigma\right]^{-1}$ as can be seen directly by evaluating the detailed balance 
condition in terms of the matrix representations,
\bea
\left [ \hat{\Omega}_\sigma^{Bures} \right]^{-1} \cdot \hat{T}_s^\dagger -  \hat{T}_s \cdot  \left [ \hat{\Omega}_\sigma^{Bures} \right]^{-1}  
= \frac{7}{600}\left[  \1 \otimes Y + Y \otimes \1 \right],
\eea
where 
\bea
	Y = \left( \begin{array}{cc} 0  &  -1 \\  1  & 0 \end{array} \right).
\eea
The family of quantum detailed balance conditions is therefore much richer than the classical counterpart. 

%
% Section - Quantum Cheeger's Inequality
%
\section{Quantum Cheeger's Inequality}
\label{cheeger}

In the context of classical stochastic processes a very powerful formalism has been developed, often referred to as the conductance bound or 
Cheeger's inequality, to bound convergence rates of stochastic processes. We will generalize this to the quantum setting in this section. 
Similar results have appeared in \cite{Hastings}.  The gap of the map $S_k$ is defined as the difference between 
the largest and second largest eigenvalue, $\Delta = 1 - \lambda_1$.  The gap can be characterized in a variational fashion \cite{HornJohnson}.

\begin{proposition}\label{vargap} 
The gap of the map $S_k = [\Omega_\sigma^k]^{-1/2} \circ T^* \circ \Omega_\sigma^k \circ T  \circ [\Omega_\sigma^k]^{-1/2}$ 
is given by
\bea
\Delta = \min_{X \in M_d} \frac{\left\langle X, (\mbox{id}-S_k) X \right \rangle}
{\frac{1}{2}\left \|( X \otimes \sqrt{\sigma} - \sqrt{\sigma} \otimes X) \right\|_{HS}^2},
\eea
where $\left\|A \right \|_{HS}^2 = \tr[A^\dagger A]$ denotes the standard Hilbert-Schmidt norm and $\langle\; , \rangle$ the corresponding 
Hilbert-Schmidt scalar product.
\end{proposition}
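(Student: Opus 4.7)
The plan is to recognize the statement as a direct application of the Courant--Fischer min-max characterization of the second-smallest eigenvalue of the positive semidefinite Hermitian operator $\mathrm{id}-S_k$, with the main computational step being to verify that the denominator on the right is precisely the squared Hilbert--Schmidt norm of the component of $X$ orthogonal to the top eigenvector.

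First I would specialize to the $\alpha=1/2$ case ($k(w)=w^{-1/2}$, so $\Omega^k_\sigma = L_\sigma^{-1/2}R_\sigma^{-1/2}$), which appears to be the case implicitly understood here because $\sqrt{\sigma}$ shows up in the denominator; a short calculation then gives $B_{\max}:=[\Omega^k_\sigma]^{1/2}(\sigma)=\sqrt{\sigma}$ with $\langle\sqrt{\sigma},\sqrt{\sigma}\rangle=\tr(\sigma)=1$. By Lemma~\ref{specInterval}, $S_k$ is Hermitian in the Hilbert--Schmidt inner product with largest eigenvalue $1$ attained uniquely (when $T$ is primitive) at $B_{\max}$, so $\mathrm{id}-S_k\succeq 0$ has kernel spanned by $\sqrt{\sigma}$ and second-smallest eigenvalue $\Delta = 1-\lambda_1$. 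Courant--Fischer therefore gives
\be
\Delta \;=\; \min_{\substack{X\in\M_d,\ X\neq 0\\ \langle\sqrt\sigma,X\rangle=0}} \frac{\langle X,(\mathrm{id}-S_k)X\rangle}{\langle X,X\rangle}.
\ee

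Second, I would carry out the expansion of the denominator appearing in the proposition. Using $(X\otimes\sqrt\sigma)^\dagger=X^\dagger\otimes\sqrt\sigma$, $\tr(\sigma)=1$ and cyclicity,
\be
\tfrac{1}{2}\bigl\|X\otimes\sqrt\sigma-\sqrt\sigma\otimes X\bigr\|_{HS}^2 \;=\; \|X\|_{HS}^2 \;-\; |\tr(\sqrt\sigma\, X)|^2 \;=\; \|X\|_{HS}^2 - |\langle\sqrt\sigma,X\rangle|^2 ,
\ee
which is exactly $\|X'\|_{HS}^2$ for $X'$ the component of $X$ in the orthogonal complement of $\sqrt\sigma$.

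Finally, I would remove the constraint in the Courant--Fischer formula. Writing $X=c\sqrt\sigma + X'$ with $\langle\sqrt\sigma,X'\rangle=0$, the numerator satisfies $\langle X,(\mathrm{id}-S_k)X\rangle=\langle X',(\mathrm{id}-S_k)X'\rangle$ because $\sqrt\sigma\in\ker(\mathrm{id}-S_k)$ and the operator is Hermitian, while the denominator reduces to $\|X'\|_{HS}^2$ by the computation above. Thus the Rayleigh-type ratio in the proposition depends only on $X'$, and minimizing over all $X$ with non-zero orthogonal component yields precisely $\Delta$. The only substantive step is the algebraic identity for the denominator; the rest is a standard min-max argument. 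The one thing to be careful about is the non-genericity of the top eigenvector: for primitive $T$ the kernel of $\mathrm{id}-S_k$ is one-dimensional so the second-smallest eigenvalue is a genuine gap, while in the non-primitive case the same proof extracts the largest $\lambda<1$ after restricting to the orthogonal complement of the full fixed-point subspace.
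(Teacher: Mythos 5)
Your proposal is correct and follows essentially the same route as the paper: a Rayleigh-quotient characterization of the second eigenvalue with top eigenvector $\sqrt{\sigma}$, the algebraic identity $\tfrac12\|X\otimes\sqrt{\sigma}-\sqrt{\sigma}\otimes X\|_{HS}^2=\|X\|_{HS}^2-|\tr[\sqrt{\sigma}X]|^2$, and removal of the orthogonality constraint by shifting $X\mapsto X-c\sqrt{\sigma}$. The only small inaccuracy is the specialization to $\alpha=1/2$: it is unnecessary, since $[\Omega_\sigma^k]^{1/2}(\sigma)=\sqrt{\sigma}$ for every $k\in\K$ (as $\sigma$ is an eigenvector of $\Delta_{\sigma,\sigma}$ with eigenvalue $1$ and $k(1)=1$), so the proposition holds for the whole family exactly as stated.
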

\proof{
The eigenvector that corresponds to the eigenvalue $\lambda_0 = 1$ of $S_k$ is given by $\sqrt{\sigma}$. 
The gap can therefore be written as\cite{HornJohnson}:
\bea
\Delta &=& \min_{X \in M_d; \tr[X\sqrt{\sigma}] = 0} 1 - \frac{\tr[X^\dagger S(X)]}{\tr[X^\dagger X]} \notag\\
&=& \min_{X \in M_d; \tr[X\sqrt{\sigma}] = 0} \frac{\tr[X^\dagger(X - S(X))]}{\tr[X^\dagger X] - \tr[X\sqrt{\sigma}]^2}\\\nonumber
&=& \min_{X \in M_d} \frac{\tr[X^\dagger(X - S(X))]}{\frac{1}{2}\left\|(X \otimes\sqrt{\sigma}- \sqrt{\sigma}\otimes X )\right \|_{HS}^2},
\eea
Note that the constrained $\tr[X\sqrt{\sigma}] = 0$ can be dropped in the last line. Suppose that $\tr[X\sqrt{\sigma}] = c$, we can then
define $X' = X - c\sqrt{\sigma}$ and vary $X'$ since the equation is invariant under such shifts.\qed}

Throughout the remainder of this section we consider unital quantum channels,  i.e. maps which obey $T(\1) = \1$.  For this case it is ensured that already the simple   map $S = T^{*} \circ T$ has a spectrum that is contained in $[0,1]$, since all $\Omega_\sigma^k$ coincide and correspond to the identity map. 
The $\chi^2$-divergence just reduces to the standard Hilbert-Schmidt inner product times a prefactor given by $d = \dim(\HS)$. In the case of a detailed balanced stochastic map it even suffices to just consider the map itself. In either case we will denote the corresponding map as $S$ from now on. 
The variational characterization of the gap $\Delta$ now allows us to give an upper as well as a lower bound to the second largest eigenvalue of $S$. 
\begin{lemma}\label{Lemma:conductanceBound}
Let $T :\M_d \rightarrow \M_d$ be a unital quantum channel. Then the second largest
eigenvalue $\lambda_1$ of its symmetrization $S = T^{*} \circ T$,
is bounded by,
\be
1-2h \leq \lambda_1 \leq 1 - \frac{1}{2}h^2,
\ee
where $h$ is Cheeger's constant defined as,
\be\label{ChegC}
h = \min_{\Pi_A,\tr[\Pi_A] \leq d/2}\frac{\tr\left[ \left(\1 - \Pi_A\right)S(\Pi_A) \right]}{\tr\left[\Pi_A\right]}.
\ee
The minimum is to be taken over all projectors $\Pi_A$  on the space $A \subset \HS$, so that $\tr[\Pi_A] \leq d/2$.
\end{lemma}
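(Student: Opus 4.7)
The plan is to obtain the two inequalities by applying Proposition \ref{vargap} in the unital case, where $\sigma = \1/d$ and $\Omega_{\sigma}^{k}$ reduces to the identity (up to the factor $d$), so the variational formula for the gap $\Delta = 1 - \lambda_{1}$ simplifies to
\begin{equation}
\Delta \;=\; \min_{X \in \M_d} \frac{\langle X, (\mathrm{id} - S)X\rangle}{\tr[X^{\dagger}X] - |\tr[X]|^{2}/d}.
\end{equation}
Since $S = T^{*}\circ T$ is Hermitian with respect to the Hilbert--Schmidt inner product, we may restrict to Hermitian test matrices $X$. The two bounds will be proved independently, each reproducing a direction of the classical Cheeger inequality.

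For the lower bound $\lambda_{1} \geq 1 - 2h$ (the easy direction), the plan is to let $\Pi_{A}$ be the projector achieving the minimum in the definition \eqref{ChegC} of $h$, and plug $X = \Pi_{A}$ into the variational formula. Since $\Pi_A$ is a projector, $\tr[\Pi_{A}^{2}] = \tr[\Pi_{A}] \equiv |A|$; using unitality and trace-preservation of $S$, the numerator becomes $|A| - \tr[\Pi_{A}S(\Pi_{A})] = \tr[(\1 - \Pi_{A})S(\Pi_{A})]$, while the denominator equals $|A|(d-|A|)/d \geq |A|/2$ whenever $|A| \leq d/2$. Dividing these gives $\Delta \leq 2 \tr[(\1-\Pi_{A})S(\Pi_{A})]/|A| = 2h$, which is what we want.

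For the upper bound $\lambda_{1} \leq 1 - h^{2}/2$ (the hard direction), the plan is to take $X$ to be a Hermitian eigenvector of $S$ with eigenvalue $\lambda_{1}$, and diagonalize it as $X = \sum_{j} x_{j}\ket{j}\bra{j}$. We then shift by a multiple $c\1$ chosen as a (lower) median of the spectrum, so that $Y = (X - c\1)_{+}$ has rank at most $d/2$; since the Dirichlet form and the variance are invariant under such a shift, and both are well behaved under truncation to the positive part, $Y$ will satisfy $\langle Y, (\mathrm{id}-S)Y\rangle / \tr[Y^{2}] \leq 2\Delta$ (factor of two because the negative part is dropped). Introducing the level projectors $E_{t} = \sum_{j : y_{j} > t}\ket{j}\bra{j}$ for $t \geq 0$, one has the layer-cake representations $Y = \int_{0}^{\infty} E_{t}\,dt$ and $Y^{2} = \int_{0}^{\infty} 2t\, E_{t}\,dt$. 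Expanding the Dirichlet form in the eigenbasis of $X$ and combining Cauchy--Schwarz with a co-area style telescoping on the integrals $\int_{0}^{\infty}\!\int_{0}^{\infty} \tr[(\1 - E_{s}) S(E_{t})]\,ds\,dt$, each term $\tr[(\1 - E_{s})S(E_{s})]$ is bounded by $h \tr[E_{s}]$ via the defining property of $h$ (since $\tr[E_{s}] \leq d/2$ by our choice of $c$). Re-assembling yields $\langle Y,(\mathrm{id}-S)Y\rangle \geq h^{2}\tr[Y^{2}]/2$ and hence $\Delta \geq h^{2}/2$.

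The main obstacle is the hard direction. While the overall structure mimics the classical Jerrum--Sinclair / Lawler--Sokal argument, the subtle point in the non-commutative setting is that $S(E_{t})$ need not commute with the level projectors $E_{s}$, so one cannot simply reduce to a classical chain on the eigenvalues of $X$. However, because the Dirichlet form $\langle Y, (\mathrm{id} - S)Y\rangle$ depends only on matrix elements $\bra{i} S(\ket{j}\bra{k})\ket{l}$ in the eigenbasis of $X$, and all relevant expressions are real-linear in the level projectors $E_{t}$, the telescoping and the Cauchy--Schwarz step go through with essentially the classical bookkeeping, modulo writing everything via traces of products $\tr[E_{s} S(E_{t})]$ and using positivity of $S$.
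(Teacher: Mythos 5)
Your overall strategy coincides with the paper's: the easy direction is obtained exactly as in the paper by inserting $X=\Pi_A$ into the variational formula of Proposition~\ref{vargap}, and for the hard direction both you and the paper diagonalize the Hermitian eigenvector $X_1$ belonging to $\lambda_1$ and thereby reduce everything to the doubly stochastic classical matrix $P_{ij}=\bra{i}S(\ket{j}\bra{j})\ket{i}$, after which the argument is the classical layer-cake/Cauchy--Schwarz proof of Cheeger's inequality. Your observation that only the traces $\tr[E_s S(E_t)]$ enter, so that non-commutativity of $S(E_t)$ with $E_s$ is harmless, is precisely the paper's reduction to $P_{ij}$ and Mihail's identity; you are in fact more explicit than the paper about keeping the level sets of size at most $d/2$.

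There is, however, a concrete quantitative gap in your hard direction. You assert the truncation bound $\avr{Y,(\id-S)Y}\le 2\Delta\,\tr[Y^2]$ and the layer-cake bound $\avr{Y,(\id-S)Y}\ge \tfrac12 h^2\tr[Y^2]$; together these give only $\Delta\ge h^2/4$, so your ``hence $\Delta\ge h^2/2$'' does not follow from your own inequalities. The spurious factor of $2$ is avoidable, and must be avoided to get the stated constant. Since $X_1$ is orthogonal to the fixed eigenvector $\1$ we have $\tr[X_1]=0$, so after possibly replacing $X_1$ by $-X_1$ the set $\{i:x_i>0\}$ has size at most $d/2$ and no median shift is needed. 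Writing $X_1=Y-Z$ with $Y=(X_1)_+$, $Z=(X_1)_-$, and using that $S$ is positivity preserving so that $\tr[Y\,S(Z)]\ge 0$, one gets
\be
\avr{Y,(\id-S)Y}=\avr{Y,(\id-S)X_1}-\tr[Y\,S(Z)]\le \Delta\avr{Y,X_1}=\Delta\,\tr[Y^2],
\ee
with no factor of $2$. With this sharper truncation lemma your co-area computation closes and yields $\Delta\ge h^2/2$ as required.
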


\begin{proof}
An upper bound to the gap is immediately found by choosing $X = \Pi_A$. Due proposition (\ref{vargap}) we can write:
\bea \Delta &\leq& \frac{\tr{[\Pi_A (\mbox{id}-S)(\Pi_A)]}}{\tr{[\Pi^2_A ]}-\frac{1}{d}\tr{[\Pi_A]}^2}\notag\\
&=&\frac{\tr{[(\1-\Pi_A)S(\Pi_A)]}}{\frac{1}{d}\tr{[(\1-\Pi_A)]}\tr{[\Pi_A]}}\leq 2h,\eea
where in the last line we have used that $\tr{[\1-\Pi_A]}\geq d/2$.\\
For the lower bound, we can restrict the minimization in (\ref{ChegC}) to diagonal projections. 
Furthermore, when considering only unital quantum channels, it is possible to reduce
the problem of bounding the gap $\Delta$ to that of bounding the gap of a classical stochastic process. 
To see this, let us work in the basis where the eigenvector $X_1 \in \M_d$
corresponding to $\lambda_1$ is diagonal.  We shall assume wlog that $X_1^\dagger = X_1$. 
In this basis, we can write $X = \sum x_i \ket{i}\bra{i}$. The numerator then becomes
\bea
\tr\left[X^\dagger(X - S(X))\right] = \sum_{ij} x_i x_j(\tr[\ket{i}\bra{i}\ket{j}\bra{j}]- \tr\left[\ket{i}\bra{i}S(\ket{j}\bra{j})\right]\notag \\
= \sum_i x_i^2 - \sum_{ij} x_ix_j P_{ij} = \frac{1}{2}\sum_{ij}P_{ij}(x_i-x_j)^2.
\eea
We introduced the matrix $P_{ij} = \bra{i}S(\ket{j}\bra{j})\ket{i}$, which is a symmetric non-negative matrix which obeys $P_{ij} \geq 0$ ,
$\sum_i P_{ij} = 1$ and $P^T = P$. Hence $P$ is doubly stochastic. Performing the same reduction in the denominator we obtain
\bea
\frac{1}{2d}\left\|(X\otimes \1 - \1 \otimes X)\right\|_{HS}^2 = \frac{1}{2d}\;\sum_{ij}(x_i - x_j)^2
\eea
Hence, we arrive at the classical version of Mihail's Identity \cite{Mihail},
\be
\Delta = \min_{\{x_i\}}\frac{\sum_{ij}P_{ij}(x_i-x_j)^2}{1/d\;\sum_{ij}(x_i-x_j)^2}.
\ee
Given the classical version of Mihail's identity, the proof of the lower bound is the same as in the classical case. For completeness we repeat it here.
\begin{comment}The constraints given by $\tr[X] = 0$ can written as $\sum_i x_i = 0$. It is important to note, that this constraint can be dropped in this
variational characterization, because the cost function has become independent of the constraint. Suppose $\sum_i x_i = c$, then 
the transformation $x'_i  =  x_i - c/d$ leaves the cost function invariant and ensures that the  $x'_i$ fulfill the constraint. \\

Given the classical version of Mihail's identity, the proof is the same as in the classical case. For completeness we will repeat the proof here.
It is straight forward to find an upper bound to the gap $\Delta$ in terms of  Cheeger's constant. We define for some subset 
$A  \subset  \{1,\ldots,d\} = A \cup A^c$, so that $\sum_{i \in A}\tr[\Pi(i)] \leq d/2$ a $ Y \in M_d$ as  
$Y = \frac{1}{d}(\sum_{i \in A}\Pi(i) - \sum_{i \in A^c}\Pi(i))$ and write $Y = \sum_i y_i \Pi(i)$.
For this $Y$, we immediately find,
\bea
\Delta \leq \frac{\sum_{ij}P_{ij}(y_i-y_j)^2}{1/d\sum_{ij}(y_i-y_j)^2} = \frac{\sum_{i \in A}\sum_{j \in A^c} P_{ij}}{\frac{1}{d} \; \sum_{i \in A^c} 1 \sum_{j \in A} 1}
\leq 2 \frac{\sum_{i \in A}\sum_{j_ \in A^c}P_{ij}}{\sum_{i \in A} 1}.
\eea
Note, that when defining the projector $\Pi_A =  \sum_{i \in A} \Pi(i)$ we can write,
\be
\frac{\sum_{i \in A}\sum_{j_ \in A^c}P_{ij}}{\sum_{i \in A} 1} = \frac{\tr\left[(\1 - \Pi_A)S(\Pi_A)\right]}{ \tr\left[\Pi_A\right]},
\ee
for all $\Pi_A$, and hence $\Delta \leq 2h$.\\
\end{comment}
First, we define, $z_i \equiv |x_i| x_i$ and write,
\bea
\sum_{ij}P_{ij}|z_i - z_j| = \sum_{ij} P_{ij} ||x_i|x_i -  |x_j|x_j| \leq \sum_{ij} \sqrt{P_{ij}}\sqrt{P_{ij}} (|x_i| + |x_j|) (x_i - x_j)\notag\\
\leq \sqrt{\sum_{ij} P_{ij} (x_i - x_j)^2}\sqrt{\sum_{ij} P_{ij} (|x_i| + |x_j|)^2 },
\eea
where we used Cauchy-Schwartz in the last step. Consider now,
\be
\sum_{ij} P_{ij} (|x_i| + |x_j|)^2 = 2(\sum_i x_i^2 + \sum_{ij}|x_i|P_{ij} |x_j|) \leq 4 \sum_i |x_i|^2.
\ee
Furthermore, note that we can bound,
\be
1/d \sum_{ij}(x_i -x_j)^2 \leq 2/d \sum_{ij} x_i^2 = 2\sum_{i}|z_i|.
\ee
We are therefore left with a lower bound to Mihail's identity, which holds for all choices of $\{x_i\}$
\be
\frac{1}{2}\left(\frac{\sum_{ij}P_{ij}|z_i - z_j|}{2\sum_i |z_i|} \right)^2 \leq \frac{\sum_{ij}P_{ij}(x_i-x_j)^2}{1/d\sum_{ij}(x_i-x_j)^2}.
\ee
We shall now assume, that $x_i \geq 0$ everywhere and we can hence drop the absolute values in the definition for the $z_i$.
This is assumption is valid since we are free in adding an arbitrary constant $x_i \rightarrow x_i  + c$ to make all $x_i$ positive.  
Note that we therefore are left with a lower bound to the gap of the form, 
\be
\Delta \geq \frac{1}{2}\left(\frac{\sum_{ij}P_{ij}|x_i^2 - x_j^2|}{2\sum_i x_i^2} \right)^2
\ee
Let's focus on the right side of the inequality. Since,
\be
2 \sum_{i,j\;:\;x_i \geq x_j} P_{ij}(x_i^2 - x_j^2) = 4 \sum_{i,j\;:\;x_i \geq x_j} P_{ij} \int_{x_j}^{x_i} t \;dt = 4 \int_{0}^{\infty} t \sum_{ij\;:\; x_i > t \geq x_j} P_{ij}\; dt,
\ee
and furthermore,
 \be
 \sum_{ij\;:\; x_i > t \geq x_j} P_{ij} = \sum_{i \in A(t)}\sum_{j \in A^c(t)} P_{ij} \;\;\; \mbox{where,} \; \; \; A(t) \equiv \left\{ i | x_i \geq t\right\},
 \ee
 we can bound, 
 \be
 4 \int_{0}^{\infty} t \sum_{ij\;:\; x_i > t \geq x_j} P_{ij}\; dt \geq h\; 4\int_0^{\infty}t \sum_{i \in A(t)} \Theta(t-x_i) \; dt = 2 \; h \left(\sum_i x_i^2\right),
 \ee
where we defined $h$ as in the same fashion as above. We have therefore found the desired lower bound for the spectral gap of the map $S$. \qed
\end{proof}

%
%subsection single-qubit example
\subsection{Example: Conductance bound for unital qubit channels}
A convenient basis for the matrix space $\M_2$  associated with the Hilbert space  $\HS = \cc^2$ is given in terms of the Pauli basis
$\{\1,\sigma_x,\sigma_y , \sigma_z\}$. In this basis a density matrix $\rho \in \sym_2$ can be parametrized in terms of its Bloch vector $ {\bf r} \in \rr^3$. 
In the Bloch representation  the density matrix reads $\rho = \frac{1}{2}\left( \1 + {\bf r}  \cdot {\bf \Sigma}\right)$, 
where ${\bf \Sigma} = ( \sigma_x , \sigma_y, \sigma_z)$. It is also straight forward to determine the matrix representation of a 
quantum channel $T : \M_2 \rightarrow \M_2$ with respect to the Pauli basis. A general channel can be written as a matrix $\hat{T} \in \M_{4}$. 
\bea
	\hat{T} = \left(\begin{array}{cc}  1 & 0 \\ {\bf t}  &  {\bf L} \end{array}\right). 
\eea
 The channel acts on a density matrix via $ T(\rho) = T(\frac{1}{2}\left( \1 + {\bf r}  \cdot {\bf \Sigma}\right)) = 
 \frac{1}{2}\left( \1 + ({\bf t} + {\bf L}{\bf r})  \cdot {\bf \Sigma}\right)$.  It can be shown, that the map $T$ is unital if and only if ${\bf t} = 0$. 
 Let us now consider the optimization for Cheeger's constant $h$ as given in Lemma (\ref{Lemma:conductanceBound}). Given the constraint, 
 we have to vary all one dimensional projectors  $\Pi_A =  \ket{\psi}\bra{\psi}$ with $\| \ket{\psi} \|_2 = 1$, so that 
 \bea
 		h = \min_{\ket{\psi} \in \cc^2} \mbox{tr}\left[  \left (\1 - \ket{\psi}\bra{\psi} \right) S \left(\ket{\psi}\bra{\psi} \right )\right].
 \eea  
The symmetrized map $S$ of the unital channel $T$, with  ${\bf t} = 0$, now assumes the matrix representation,
\bea
	\hat{S} = \left(\begin{array}{cc}  1 & 0 \\ 0 &  {\bf L}^\dagger {\bf L} \end{array}\right).
\eea 
Furthermore note, that any projector $ \ket{\psi}\bra{\psi} \in \sym_2$ can be parametrized via a Bloch vector ${\bf a} \in \rr^3$ that obeys 
$\| {\bf a} \|_2 = 1$. The minimization for Cheeger's constant reduces therefore to
\bea
	h = \min_{\| {\bf a} \|_2 = 1} 1 - \bra{\bf a} \; {\bf L}^\dagger {\bf L} \;  \ket{{\bf a}},  
\eea   
where $\langle{\bf a} \ket{{\bf b}}$ denotes the canonical scalar product in $\rr^3$. The minimum is attained when ${\bf a}$ is the eigenvector 
associated with the largest eigenvalue $s_1^2$ of the matrix $ {\bf L}^\dagger {\bf L}$. 
Hence for an arbitrary single qubit unital channel, Cheeger's constant  is given by $h = 1 - s_1^2$, where $s_1$ is the largest singular value
of the matrix ${\bf L}$ and hence the second largest singular value of the channel $T$.  We see that the conductance bound as stated in 
(\ref{Lemma:conductanceBound}) is indeed satisfied, since 
\bea
	2s_1^2 -1  \leq s_1^2 \leq \frac{1}{2}(1+ s_1^2).
\eea

%
% Conclusion and Discussion
%
\section{Conclusion and Discussion}\label{concl}
We have seen that by generalizing the $\chi^2$-divergence to the quantum setting, many of the classical results for the convergence of Markov processes can be recovered. The general perception, that the convergence should be governed by the spectral properties of the quantum channel could be verified in the asymptotic limit. The fact that we were working with non-commuting probabilities gave rise to a larger set of possibilities of defining an inversion of the fixed point density matrix, all of which give rise to a valid upper bound for the trace distance. An interesting question is how the different singular values $s^k_i$ of the corresponding quantum discriminant 
relate to each other.  The generalization of the $\chi^2$-divergence also led to the definition of detailed balance for quantum channels. Again, 
not only a single condition for quantum detailed balance exists but an entire family of conditions each determined by a different function $k \in \K$, all of which 
coincide in the case when we consider classical stochastic processes on a commuting subspace. The quantum concept of detailed balance therefore 
appears to be richer and allows for a wider set of channels to obey this definition. The conductance bound that was derived could only be shown 
for unital quantum channels. However we would like to point out,  that it is possible to give conductance bounds for classical maps when the Markov chain is not 
doubly stochastic. The fact that in general we may not assume that the fixed point of an arbitrary channel commutes with the eigenvector associated to the 
second largest eigenvalues seems to hinder a generalization for non-unital channels. Moreover, the classical conductance bound has a nice geometrical interpretation 
in terms of the cut set analysis and the maximal flow on the graph associated to the stochastic matrix $P_{ij}$. When investigating general quantum channels 
such a nice geometric interpretation seems to be lacking. For unital quantum channels Cheeger's constant can also be viewed in terms of the minimal probability 
flow of one subspace to its compliment.   

\section{Acknowledgements}
Part of this work was done during a workshop at the Erwin Schr\"odinger Institute for Mathematical Physics.  MMW and MJK thank David Reeb for helpful discussions.
KT was supported by the FWF doctoral program CoQuS. FV is supported by the FWF grant FoQuS and the European 
grants QUEVADIS and QUERG. MJK was supported in part by the NBIA. MMW acknowledges financial support by the Danish Research Counsil (FNU), QUANTOP and the European projects QUEVADIS and COQUIT.   MBR was partially supported by US National Science Foiundation Grant DMS-0604900.

\appendix 

\section{Families of divergences} \label{appA}

The most widely used family of divergences, often called $\alpha$-divergence \cite[Chapter 7]{AN},
is associated with the functions 
\bea
k_\alpha^{\rm WYD} (w) = \frac{(1 - w^\alpha)(1 - w^{1- \alpha })}{\alpha(1-\alpha) \, (1-w)^2 }
\qquad \hbox{for}  \qquad 
 \alpha \in [-1,2]
\eea 
This family is sometimes called the WYD divergences, because it arises
from an extension of the Wigner-Yanase-Dyson entropy \cite{Lb,LR} associated with the (unsymmetrized) function
$g(w) = \frac{1}{\alpha(1-\alpha)   } (w - w^\alpha)$.   In the limit $\alpha \rightarrow 1 $ this
yields \cite{JR}  the familiar (asymmetric) relative entropy 
$H(\rho, \sigma) = {\rm Tr} \rho (\log \rho - \log \sigma)$ and 
 $\Omega^{\log}_P$ given by  \eqref{omegalog}. Like the  family of divergences introduced here,
the minimal WYD divergence occurs for $\alpha = 1/2$, it is convex in $\alpha$, symmetric around $\alpha = 1/2$
and yields the maximal $\frac{1+w}{2w}$ when $\alpha = -1$ or $2$.   However,
$\alpha = 1/2 $ gives  $ k_{1/2}^{\rm WYD} (w) = 4 (1 + \sqrt{w})^{-2}$ which is quite different from
$k^{\rm mean}_{1/2}(w) = \sqrt{w} $.
 The WYD  family is often studied only for
  $\alpha \in (0,1)$; it was first observed by Hasegawa in \cite{Hasegawa}   that it yields a
monotone metric if and only if   $\alpha \in [-1,2]$.  

The metrics associated with    $k^{\rm mean}_\alpha(w)$ and $k_\alpha^{\rm WYD} (w)$
both give increasing families for $\alpha \geq \frac{1}{2} $ and both yield
 the maximal metric $k(w) =(1+w)(2w)$ for $\alpha$ the maximal values of $1$ and $2$
 respectively.  However, neither reduces to the minimal metric 
 $k(w) = 2/(1+w) $.   The measure $\delta(s - a) $ in \eqref{intOmega} leads to the family 
 $k_a(w) =  \frac{ (1+a)^2}{2} \frac{ (1+w)}{ (1 + wa )( w + a) }$  for $a \in [0,1]$
 which reduces to the 
 the maximal and minimal functions for $a = 0,1$.  However, this family is
 neither increasing nor decreasing.  
  Hansen \cite{Hansen} has found families of functions which increase
 monotonically from the smallest to the largest of which we mention
 only  
 \bea
    k_a(w) = w^{-a}  \Big( \frac{1 + w}{2} \Big)^{2a-1}  \qquad \hbox{for}  \qquad 
 a \in [0,1] ~.
 \eea

\bigskip

\section{Proof of a key inequality}
The proof of the contractivity of a general Riemannian metric is based on the following theorem first proved in \cite{LR}.

\begin{theorem}\label{proofMono}
For a channel $T : \M_d \rightarrow  \M_d $, we have that,
\bea\label{inequRuskai}
\tr\left[A^{\dagger}\frac{1}{R_\sigma + sL_\rho} A\right] = \tr\left[T \left(A^{\dagger}\frac{1}{R_\sigma + sL_\rho}A\right)\right] \geq \\\nonumber
\tr\left[T(A)^{\dagger}\frac{1}{R_{T(\sigma)} + sL_{T(\rho)}} T(A)\right].
\eea
\end{theorem}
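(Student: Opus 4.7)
The plan is to prove the inequality by a variational representation of the quadratic form combined with Kadison's Schwarz inequality for the unital completely positive dual channel $T^*$. The equality in the first line of \eqref{inequRuskai} is immediate from trace preservation of $T$: for any $X \in \M_d$, $\tr[X] = \tr[T(X)]$; applying this with $X = A^\dagger (R_\sigma + sL_\rho)^{-1} A$ yields the first equality at no cost.

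For the nontrivial inequality, I would first establish the variational identity
\begin{equation*}
\tr\left[A^\dagger (R_\sigma + sL_\rho)^{-1} A\right] = \sup_{X \in \M_d} \Bigl\{ 2\,\Real\tr[A^\dagger X] - \tr[X^\dagger X \sigma] - s\,\tr[X^\dagger \rho X] \Bigr\},
\end{equation*}
obtained by completing the square and using that $R_\sigma + sL_\rho$ is self-adjoint and positive on $\M_d$ (with respect to the Hilbert--Schmidt inner product); the supremum is attained at $X_\ast = (R_\sigma + sL_\rho)^{-1} A$. The analogous identity holds with $\sigma,\rho,A$ replaced by $T(\sigma),T(\rho),T(A)$ and $X$ by some variable $Y \in \M_d$.

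Next I would restrict the supremum on the left hand side to the ansatz $X = T^*(Y)$. The linear term transforms cleanly via Hilbert--Schmidt duality, $2\Real\tr[A^\dagger T^*(Y)] = 2\Real\tr[T(A)^\dagger Y]$. For the two quadratic terms I invoke Kadison's Schwarz inequality, which is available because $T^*$ is unital (as $T$ is trace preserving) and completely positive:
\begin{equation*}
T^*(Y)^\dagger T^*(Y) \leq T^*(Y^\dagger Y), \qquad T^*(Y) T^*(Y)^\dagger \leq T^*(Y Y^\dagger).
\end{equation*}
Multiplying by $\sigma,\rho\geq 0$, using cyclicity of the trace, and then the duality $\tr[T^*(Z)\xi] = \tr[Z\, T(\xi)]$ yields
\begin{equation*}
\tr[T^*(Y)^\dagger T^*(Y)\sigma] \leq \tr[Y^\dagger Y\, T(\sigma)], \qquad \tr[T^*(Y)^\dagger \rho\, T^*(Y)] \leq \tr[Y^\dagger T(\rho) Y].
\end{equation*}
Because $s \geq 0$ (as enforced by the integration range in \eqref{intOmega}), subtracting these two bounds only lowers the value of the variational functional, so
\begin{equation*}
\tr[A^\dagger (R_\sigma + sL_\rho)^{-1} A] \geq 2\Real\tr[T(A)^\dagger Y] - \tr[Y^\dagger Y\, T(\sigma)] - s\,\tr[Y^\dagger T(\rho) Y].
\end{equation*}
Taking the supremum over $Y\in\M_d$ on the right, by the variational identity applied to the image pair, yields exactly $\tr[T(A)^\dagger (R_{T(\sigma)} + sL_{T(\rho)})^{-1} T(A)]$, which is the claim.

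The main obstacle is the correct bookkeeping of the two quadratic terms: the left--right asymmetry between $R_\sigma$ and $L_\rho$ forces the use of \emph{both} forms of Kadison's inequality above, and one must make sure the sign $s\geq 0$ is explicitly used so that bounding the quadratic penalties \emph{from above} actually decreases the value of the functional. Complete positivity of $T$ enters only via the $2$-positivity required for Kadison's inequality; everything else is the duality between $T$ and $T^*$ combined with completing the square.
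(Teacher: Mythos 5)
Your proof is correct and is essentially the paper's own argument in variational dress: the paper's step $\tr[X^\dagger X]\geq 0$ with $X$ built from $A$ and $T^*(B)$ is exactly your completion of the square, its auxiliary $B=(R_{T(\sigma)}+sL_{T(\rho)})^{-1}T(A)$ is the maximizer of your dual functional, and it invokes the same two operator Schwarz inequalities for the unital CP map $T^*$ together with the duality $\tr[T^*(Z)\xi]=\tr[Z\,T(\xi)]$. No substantive difference.
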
 
\proof{
Let $\sigma > 0$, then $\tr[ A^{\dagger} \sigma A] \geq 0$, and $\tr[ A^{\dagger}A \sigma] \geq 0$ so that $L_\sigma$ as well as 
$R_\sigma$ are both positive semi definite super operators on the matrix space. Therefore we infer, that for a positive $\rho > 0 $ 
the operator $R_{\sigma} + s L_{\rho}$ is also positive. We define a matrix $X = [R_\sigma + sL_\rho]^{-1/2}(A) + [R_\sigma + sL_\rho]^{1/2}T^*(A)$ 
and furthermore $B = [R_{T(\sigma)} + sL_{T(\rho)}]^{-1}T(A)$. Since $\tr[X^{\dagger}X] \geq 0$, we have that
 \be\label{messy}
  \tr\left[A^\dagger\frac{1}{R_\sigma + s L_\rho}A \right] - \tr\left[T^*(B^\dagger)A\right]  
-  \tr\left[A^\dagger T^*(B)\right] + \tr\left[T^*(B^\dagger)[R_\sigma + s L_\rho]T^*(B)\right] \geq 0.
\ee 
Furthermore note, that 
\be
 -\tr\left[A^{\dagger} T^*(B) \right] - \tr\left[T^*(B^\dagger) A \right] = -2 \tr\left[T(A^\dagger) \frac{1}{R_{T(\sigma)} + s L_{T(\rho)}} T(A) \right].
\ee
It therefore suffices to show that we are able to bound the last term in (\ref{messy}) by the right side of the inequality (\ref{inequRuskai}).
Note, that 
\bea
	\tr\left[T^*(B^\dagger)[R_\sigma + s L_\rho]T^*(B)\right] = \tr\left[T^*(B^\dagger)T^*(B)\sigma + sT^*(B^\dagger)\rho T^*(B) \right] \\\nonumber
	\leq \tr\left[T^*(B^\dagger B)\sigma + sT^*(B B^\dagger)\rho\right], 
\eea
since $\rho,\sigma > 0$ and due to the operator inequality $T^*(B^\dagger)T^*(B) \leq T^*(B^\dagger B)$. This inequality holds for 
any $B$ since $T$ is a channel and by that trace preserving, hence $T^{*}(\1) = \1$. With 
$\tr\left[ T^*(B^\dagger B) \sigma \right] = \tr\left[ B^\dagger B T(\sigma) \right]$ we can write 
\bea
	&&\tr\left[T^*(B^\dagger)[R_\sigma + s L_\rho]T^*(B)\right] \leq \tr\left[B^\dagger BT(\sigma) + s B^\dagger BT(\rho)\right] \\\nonumber
	&=& \tr\left[B^\dagger [R_{T(\sigma)} + s L_{T(\rho)}] B\right] =  \tr\left[B^\dagger T(A)\right] =
	  \tr\left [T(A^{\dagger })\frac{1}{R_{T(\sigma)} + s L_{T(\rho)}} T(A) \right]. 
\eea
\qed}


\begin{thebibliography}{34}

%
\bibitem{Diaconis} P. Diaconis and D. Stroock, Geometric bounds or eigenvalues of Markov chains, The Annals of Applied Probability, Vol. 1, No .1(Feb., 1991),pp 36-61
\bibitem{Fill} J. A. Fill, Eigenvalue bounds on convergence to stationarity for nonreversible Markov chains  with an application to the exclusion process,
 The Annals of Applied Probability, Vol. 1, No. 1(Feb. 1991), pp. 62-87.
\bibitem{Mihail} M. Mihail, Ph.D. dissertation , Dept. Computer Science, Harvard Univ.
\bibitem{Mtime} D. A. Levin, Y. Peres and E. L. Wilmer  {\emph Markov Chains and Mixing Times}, Amer. Math. Soc., Providence, RI (2008)
\bibitem{Pears:chi2} K. Pearson, On the Criterion that a given system of deviations from the 
probable in the case of correlated system of variables is such that it can be reasonable supposed to have arisen from random sampling, Philos. 
Mag. 50 (1900) 157--172.
\bibitem{dissFrank} F. Verstraete, M. M. Wolf  and J. I. Cirac, Quantum computation and quantum-state engineering driven by dissipation, 
Nature Physics 5, 633 - 636 (2009)
\bibitem{quantMet} K. Temme, T.J. Osborne, K.G. Vollbrecht, D. Poulin, F. Verstraete, Quantum Metropolis sampling, 	arXiv:0911.3635v1
\bibitem{fcsFannes} M. Fannes, B. Nachtergaele, and R. F. Werner, Finitely Correlated States On Quantum Spin Chains, Commun. Math. Phys. 144, 443 (1992)
\bibitem{MPS} S. Rommer and S. \"Ostlund, Class of ansatz wave functions for one-dimensional spin systems and their relation to the density matrix renormalization group,
Phys. Rev. B 55, 2164 - 2181
\bibitem{Holevo} A. S. Holevo, {\emph Statistical Structure of Quantum Theory}, Springer Lecture Notes in Physics (2001)
\bibitem{Lindblad} G. Lindbald, Commun. Math. Phys. 48, 119 (1976).
\bibitem{Sanz} M. Sanz, D. Perez-Garcia, M.M. Wolf, J.I. Cirac, A quantum version of Wielandt's inequality,    arXiv:0909.5347
\bibitem{Petz0}  D. Petz, ``Quasi-Entropies for Finite quantum Systems'' Reports on Math. Phys.  23, 57 -- 65  (1986).
\bibitem{Petz1} D. Petz, Monotone metrics on matrix spaces, Linear Algebr. Appl. 244, 81 - 96 (1996). 
\bibitem{Petz2} D. Petz and Cs. Sudar, Geometries of quantum states,  J. Math. Phys. 37, 2662 -- 2673 (1996). 
\bibitem{Petz3} D. Petz and M. B. Ruskai, Contraction of generalized relative entropy under stochastic mappings, Quantum Probab.
\bibitem{PetzNew}  D. Petz and C. Ghinea, Introduction to quantum Fisher information, arXiv:1008.2417, (2010).
\bibitem{LR} A. Lesniewski, M.B. Ruskai, Monotone Riemannian metrics and relative entropy on noncommutative probability spaces, J. Math. Phys. 40 (1999), 5702â 5724 
\bibitem{Araki1}  H. Araki, Relative entropy of state of von Neumann algebras, Publ. RIMS Kyoto Univ. 9, 809 -- 833 (1976). 
\bibitem{Araki2} H. Araki, Recent progress in entropy and relative entropy, Proceedings of the VIIIth International Congress on 
Mathematical Physics, edited by R. Seneor and M. Mebkhout (World Scientifc, Singapore, 1987), pp. 354 -- 365.
\bibitem{Bhatia} R. Bhatia, {\emph Matrix Analysis}, Springer-Verlag, New York, 1997.
\bibitem{Petz4} D. Petz, Quasi-entropies for finite quantum systems, Rep. Math. Phys., 23(1986), 57 -- 65
\bibitem{Petz5} D. Petz, Quasi-entropies for states of a von Neumann algebra, Publ. RIMS. Kyoto Univ. 21(1985), 781--800 Infnite Dimensional Anal. 1, 83 -- 89 (1998).
\bibitem{HornJohnson} R. A. Horn and C. R. Johnson {\emph Matrix Analysis} Cambridge University Press (2007).
\bibitem{HornJohnson2} R. A. Horn and C. R. Johnson {\emph Topics in Matrix Analysis} Cambridge University Press (2006).
\bibitem{Araki} H. Araki, On an inequality of Lieb and Thirring, Lett. Math. Phys. 19, 167â 170 (1990).
\bibitem{Ruskai1} M. B. Ruskai, âBeyond Strong Subadditivity? Improved Bounds on the Contraction of Generalized Relative Entropy, Rev. Math. Phys. 6 1147 -- 1161 (1994).
\bibitem{RuskaiStillinger} Ruskai, M. B. and Stillinger, F. H., Convexity inequalities for estimating free energy and relative entropy, J. Phys. A 23, 2421 -- 2437 (1990).
%\bibitem{Ando} T. Ando: Convexity of certain maps on positive definite matrices and applications to Hadamard products, Lin. Alg. and Appl., 26 203--241 (1979).
\bibitem{Liebtensor} E.H. Lieb: Convex trace functions and the Wigner-Yanase-Dyson Conjecture, Adv. Math. 11 267--288 (1973).
\bibitem{refdetBal} A. Frigerio, Simulated annealing and quantum detailed balance, J. Stat. Phys. , Vol. 58, Nos. 1/2, (1990).
\bibitem{Hastings} M.B. Hastings, Random untaries give quantum expanders, Phys. Rev. A 76, 032315 (2007).
\bibitem{Morozova} E.A. Morozova and N.N. Chentsov, Markov invariant geometry on the state manifolds (Russian), Itogi Nauki i Tekhniki 36:69-102 (1990).
\bibitem{PerezWolf} D. Perez-Garcia, M.M. Wolf, D. Petz, M.B. Ruskai: J. Math. Phys. 47, 083506 (2006).
\bibitem{Teheral} B. M. Terhal and D. P. DiVincenzo, Phys. Rev. A 61, 2301 (2000).
\bibitem{Alicki} R. Alicki Rep.Math.Phys: 10, 249-258 (1976).
\bibitem{FGKV} A Kossakowski, A Frigerio, V Gorini, M Verri: Commun. Math. Phys. 57, 97-110 (1977).
\bibitem{Majewski} W.A. Majewski, The detailed balance condition in quantum statistical mechanics J. Math. Phys. 25 614 (1984)
\bibitem{Streater} A. Majewski, R.F. Streater, Detailed balance and quantum dynamical maps, J. Phys. A 31 (1998) 7981-7995.

\bibitem{AN} A. Amari and H. Nagaoka,
     {\em Methods of Information Geometry},
     Translations of Mathematical Monographs, {\bf 191}
     (American Mathematical Society and Oxford University Press, 2000).

\bibitem{JR} A. Jen\v{c}ov\'{a}  and M.B. Ruskai  
 ``A Unified Treatment of Convexity of Relative Entropy and Related Trace Functions, 
   with Conditions for Equality''    arXiv:0903.2895
\bibitem{Hasegawa} H. Hasegawa and D. Petz, On the Riemannian metric of $\alpha$-entropies of 
density matrices, Lett. Math. Phys. 38(1996), 221 -- 225.
\bibitem{Lb}  E. Lieb  ``Convex trace functions and the Wigner-Yanase-Dyson conjecture'',
 {\em Adv. Math.}  {\bf  11}, 267--288 (1973).   % Reprinted in \cite{Lb2}.

\bibitem{Hansen}   F. Hansen, 
 Metric adjusted skew information,
  Proc. National Academy of Science of the USA, 105, 9909-9916 (2008).

% 
\end{thebibliography}
\end{document}